\pgfplotsset{compat=1.4}
\numberwithin{equation}{section} 
\newtheorem{theorem}{Theorem}[section]
\newtheorem{lemma}[theorem]{Lemma}
\newtheorem{corollary}[theorem]{Corollary}
\def\N{{\mathbb N}}
\def\Z{{\mathbb Z}}
\def\C{{\mathbb C}}
\def\R{{\mathbb R}}
\def\DD{{\mathbb D}}
\def\bbS{{\mathbb S}}
\def\cH{{\mathcal H}}
\def\cB{{\mathcal B}}
\def\cC{{\mathcal C}}
\def\cM{{\mathcal M}}
\def\cA{{\mathcal A}}
\def\cF{{\mathcal F}}
\def\cE{{\mathcal E}}
\def\cD{{\mathcal D}}
\def\ri{{\mathrm{i}}}
\def\rd{{\mathrm{d}}}
\def\bbI{{\mathrm{Id}}}
\newcommand{\sB}{\mathscr{B}}
\newcommand{\sC}{\mathscr{C}}
\def\bk{{\mathbf k}}
\def\br{{\mathbf r}}
\def\bK{\boldsymbol{\omega}}
\def\bnull{{\mathbf 0}}
\def\Tr{{\rm Tr} \, }
\def\Ran{{\rm Ran} \, } 
\def\dist{{\rm dist}}
\def\SU{{\rm SU}} 
\def\U{{\rm U}} 
\def\obs{{\rm obs}}
\def\TRIM{{\rm TRIM}}
\newcommand{\BZ}{{\mathcal B}}
\newcommand{\eps}{\varepsilon}
\newcommand{\ie}{{\sl i.\,e.\ }}   
\newcommand{\eg}{{\sl e.\,g.\ }} 
\DeclareMathOperator*{\Ch}{Ch}
\DeclareMathOperator*{\Span}{Span}
\title{Localised Wannier functions in metallic systems}
\author{Horia Cornean, David Gontier, Antoine Levitt, Domenico Monaco} 
\begin{document}  
\maketitle  
\begin{abstract}
The existence {and construction} of exponentially localised Wannier functions for insulators is a well-studied problem. {In comparison}, the case of metallic systems has been much less explored, even though localised Wannier functions constitute an important and widely used tool for the numerical band interpolation of metallic condensed matter systems. In this paper we prove that, under generic conditions, $N$ energy bands of a metal can be exactly represented by $N+1$ Wannier functions decaying faster than any polynomial. We also show {that}, in general, the lack of a spectral gap does not allow for exponential decay.

\paragraph{Keywords}
Wannier functions; metallic systems; band interpolation; Chern numbers. 

\paragraph{MSC2010}
35Q40; 81Q30; 81Q70.
\end{abstract}

\section{Introduction}

In this paper we consider the problem of constructing Wannier functions for metallic systems. We start from a system of independent electrons in a periodic crystal in physical dimension $d \le 3$, whose properties are characterised by the Schr{\"o}dinger operator $H = -\Delta + V$, where $V$ is a periodic potential. By Bloch theory, the spectral properties of $H$ can be studied through its Bloch fibers $H(\bk) = (- \ri \nabla + \bk)^{2} + V$ with periodic boundary conditions on a unit cell of the crystal. In particular, the eigenvalues $\varepsilon_{n\bk}$ of $H(\bk)$ give access to the band structure of the crystal. As it is computationally expensive to diagonalise $H(\bk)$ for a given $\bk$, it is numerically desirable to interpolate the eigenvalues $\varepsilon_{n\bk}$ efficiently, so that a limited sampling of the Brillouin zone can still yield an accurate representation of the band structure.

For a fixed $n \in \N$, $\varepsilon_{n\bk}$ is periodic as a function of $\bk$. Therefore a very natural interpolation method is Fourier interpolation, where the values of $\varepsilon_{n\bk}$ on a finite equispaced grid are used to determine the first Fourier coefficients of the function, which can then be used to determine $\varepsilon_{n\bk}$ at any other point outside the grid. This method is very efficient if the functions $\varepsilon_{n\bk}$ are smooth, which corresponds to a quick decay of their Fourier coefficients. A major difficulty in this approach is induced by eigenvalue crossings; at such points, the eigenvalues are no longer differentiable. While in dimension $1$ this can locally be fixed by a relabelling of the eigenvalues, this is not possible in higher dimensions. These crossings produce Gibbs-like oscillations in the Fourier interpolation and limit its practical efficiency.

This can be remedied by noting that while the eigenvalues themselves have discontinuous first derivatives, they can be expressed as eigenvalues of a smooth matrix of much smaller size. For instance, in the insulating case, the first $N$ bands are isolated from the rest. Assume that we can find a smooth basis  (not necessarily consisting of eigenvectors) $u_{1\bk},\dots,u_{N\bk}$, which satisfies the appropriate pseudo-periodicity conditions and spans the subspace associated to the first $N$ eigenvalues of $H(\bk)$. Then we can build the $N\times N$ Hermitian matrix $A_{mn}(\bk) = \langle u_{m\bk}, H(\bk) u_{n\bk}\rangle$, which is smooth and periodic. Given a point $\bk$ not on the grid, we can use Fourier interpolation to interpolate the small-dimension matrix $A$ at $\bk$, and diagonalise it to recover an accurate approximation to $\varepsilon_{1\bk},\dots,\varepsilon_{N\bk}$. Equivalently, this procedure can be seen as building a tight-binding model, whose parameters, the Fourier coefficients of $A_{mn}(\bk)$, can be written as matrix elements of $H$ in the basis of Wannier functions \cite{yates2007spectral}. The smoothness of $u_{n\bk}$ (or equivalently the localisation properties of the Wannier functions) determine the effectiveness of the interpolation scheme. It is therefore important in applications to establish the existence of localised Wannier functions and design algorithms to construct them.

For insulating systems, constructing localised Wannier functions amounts to finding a smooth and pseudo-periodic basis of the spectral subspace associated with the first $N$ bands. This is not always possible because of the competition between smoothness and pseudo-periodicity. Accordingly, the possibility of constructing localised Wannier functions is characterised by a set of topological invariants, the Chern numbers \cite{brouder2007exponential,panati2007triviality}. When these vanish, as is the case for instance for systems with time-reversal symmetry (TRS), it is possible to construct exponentially localised Wannier functions. Different algorithms exist to construct them, either by optimising a smoothness criterion starting from a physical initial guess \cite{marzari1997maximally,mostofi2008wannier90,mustafa2015automated}, or by building a smooth gauge from the ground up \cite{fiorenza2016construction,cornean2016construction,cornean2017_3D,cances2017robust,Gontier2017todo,damle2017scdm}.

We consider in this paper the case of \emph{metallic} systems, where the bands are not isolated. The construction of Wannier functions for non-isolated bands has been proposed in \cite{souza2001maximally}, and successfully used over many years to compute properties of materials (see \eg \cite{marzari2012maximally} for a review). In this case, one is interested in reproducing the band structure of the first $N$ bands. In general, they cross with higher bands, and it is not possible to find $N$ localised Wannier functions representing them. One then tries to find $N+K$ Wannier functions that represent the first $N$ bands, in the sense that they span a superspace of the spectral subspace associated to the first $N$ bands. Numerically, this yields a $(N+K) \times (N+K)$ matrix $A(\bk)$, whose $N$ lowest eigenvalues are $\varepsilon_{1\bk},\dots,\varepsilon_{N\bk}$, and whose other $K$ eigenvalues are ignored. As in the insulating case, this small matrix can be interpolated efficiently to recover the whole band structure.

In contrast with the spectrally isolated case, nothing is known theoretically about this procedure, and it is not \textit{a priori} clear what the localisation properties of these Wannier functions are. In this paper, we prove that, under conditions satisfied by most systems (see Assumptions 1 and 2 in Section~\ref{sec:main}), one can always find $N+1$ localised Wannier functions which decay faster than any polynomial (almost exponential localisation), and which represent the first $N$ bands in the above sense. We prove that this is optimal, in the sense that exponential localisation is impossible in general. Note that our results prove the existence of almost exponentially localised disentangled Wannier functions in the sense of \cite{souza2001maximally}. Importantly, it does not prove that the method of minimising the spread of \cite{souza2001maximally} produces almost-exponentially localised Wannier functions: indeed, this fact was recently found to be false in general \cite{damle2017}. We refer to Section \ref{sec:discuss} for a discussion.

The paper is structured as follows. In Section~\ref{sec:main} we present our generic assumptions and state our results. We provide a brief but self-contained introduction to Chern numbers and their relevance to the existence of  smooth extensions of families of projections in Section~\ref{sec:chern}.  The proofs of our main results are gathered in Section~\ref{sec:proof}. Finally,   we show in Appendix~\ref{app:analyticity} that the construction of exponentially localised Wannier functions is impossible in general.

\paragraph{Acknowledgements}

Financial support from Grant 4181-00042 of the Danish Council for Independent Research $|$ Natural Sciences, from the ERC Consolidator
Grant 2016 ``UniCoSM - Universality in Condensed Matter and Statistical Mechanics'', and from PEPS JC 2017 is gratefully acknowledged.

\section{Main results} \label{sec:main}

\subsection{Statement of the problem}
Let $\cH$ be a complex Hilbert space of dimension $M \in \N \cup \{ + \infty \}$. We consider a family of self-adjoint operators $H(\bk)$ acting on $\cH$ parametrised by $\bk \in \R^d$ with $d \le 3$, and satisfying the following properties:
\begin{enumerate}[(i)]
\item \label{item:smooth} the map $\bk \to H(\bk)$ is smooth%
\footnote{In the case $M = \infty$ and $H(\bk)$ unbounded, we assume that there exists $\lambda \in \R$ such that $H(\bk)$ is bounded from below by $\lambda$ for all $\bk \in \R^{d}$ and that $\bk \mapsto (H(\bk)-\lambda)^{-1}$ is compact and smooth.}%
;
\item \label{item:periodic} {the map $\bk \to H(\bk)$ is $\Z^d$-periodic}.
\end{enumerate}

Here and in the rest of the paper, smooth means infinitely differentiable. These conditions naturally appear in the study of Schr{\"o}dinger operators $H = -\Delta + V$ with $V$ being a $2\pi\Z^{d}$-periodic potential, see \eg \cite[Theorems XIII.97 and XIII.99]{reed1978analysis}, where it is proved that $H$ is unitarily equivalent to $\int_{[-1/2,1/2]^d}^\oplus H(\bk) \rd\bk$, where the unbounded ($\Z^d$-periodic) fiber operator $H(\bk)=-\Delta+V$ acts on the Hilbert space $L^2([-\pi,\pi]^d)$ with $\bk$-twisted boundary conditions. This operator is often studied with a change of variables: after a unitary rotation $\tau_\bk$ given by the multiplication with $e^{-\ri \bk\cdot \br}$ with $\br\in [-\pi,\pi]^d$, one generates a new fiber Hamiltonian $\widetilde{H}(\bk):=\tau_\bk H(\bk)\tau_\bk^{-1} = (-\ri \nabla + \bk)^{2} + V$ having periodic boundary conditions but which is no longer periodic in $\bk$. In this case, periodicity (\ref{item:periodic}) is replaced by the covariance property
\begin{enumerate}
\item[(ii')] \label{item:periodic2} for all $\mathbf{K} \in \Z^d$, $\widetilde{H}(\bk + \mathbf{K}) = \tau_{\mathbf{K}} \widetilde{H}(\bk) \tau_{\mathbf{K}}^{-1}$.
\end{enumerate}
It is possible to deal with this covariance property directly, but to simplify the geometric arguments in this paper we will only consider the $\bk$-periodic picture, \ie $H(\bk)$. In order to avoid working with unbounded operators and in particular with the fact that the domain of $H(\bk)$ is $\bk$-dependent, we can instead study the smooth, compact and periodic family defined by the resolvent $(H(\bk) - \lambda)^{-1}$, acting on $L^2([-\pi,\pi]^d)$. The conditions above on $H(\bk)$ with finite $M$ also cover the case of discrete Schr\"odinger operators and tight-binding models of common use in computational solid state physics.

We sometimes consider the extra condition that $H$ is time-reversal symmetric (TRS) {with respect to a time-reversal operator of bosonic type}, which means that there exists an antiunitary operator $\theta \colon \mathcal{H} \to \mathcal{H}$ such that {$\theta^2= {\rm Id}$} and 
\begin{enumerate}
	\item[(iii)] for all $\bk \in \R^{d}$, $H(-\bk) = \theta H(\bk) \theta^{-1}$.
\end{enumerate}
The TRS property holds for a system without magnetic fields and when spin is ignored, in which case $\theta$ acts simply as complex conjugation on the wavefunctions.

We work in the Brillouin zone $\BZ := \R^{d} / \Z^{d}$, equipped with the topology of a torus. In particular, properties~\eqref{item:smooth} and~\eqref{item:periodic} are equivalent to saying that $H$ defines a smooth map on the torus $\BZ$. We denote by $\varepsilon_{1}(\bk) \le \varepsilon_{2}(\bk) \le \cdots \leq \eps_{M}(\bk)$ the eigenvalues of $H(\bk)$ labelled in increasing order, counting multiplicities, and by $K_n \subset \BZ$ the set of crossings between the $n$-th and $(n+1)$-th bands:
\[ K_n :=  \left\{\bk \in \BZ, \ \varepsilon_{n\bk} = \varepsilon_{n+1,\bk} \right\}. \]
Outside of the crossing set $K_n$, we can define the projection $P_n(\bk)$ on the eigenspace corresponding to the $n$ lowest eigenvalues. If $u_{m \bk}$ denotes the $m$-th Bloch function, so that $H(\bk) u_{m\bk} = \varepsilon_{m\bk} u_{m\bk}$, then
\begin{equation} \label{eq:def:Pn} 
\forall \bk \in \cB \setminus K_n, \quad P_{n}(\bk) = \sum_{m \le n} \left| u_{m\bk} \right\rangle \left\langle u_{m\bk} \right| = \dfrac{1}{2 \pi \ri} \oint_{\sC} \left (z - H(\bk)\right )^{-1} {\rd z},
\end{equation}
where $\sC \equiv \sC_{\bk}$ is a contour {in the complex energy plane} that encloses $\{ \varepsilon_{1}(\bk), \cdots, \varepsilon_{n}(\bk) \}$ but not $\{\varepsilon_{n+1}(\bk),\dots\}$. The contour can be chosen to be locally constant in $\bk$. Then, this last formula shows that the map $\bk \mapsto P_{n}(\bk)$ is smooth on $\BZ \setminus K_n$, even though the individual Bloch functions are not smooth at crossings between the bands $\varepsilon_1,\dots,\varepsilon_n$. Our goal is the construction of a \emph{Bloch frame}, \ie a smooth orthonormal basis of $\Ran P_{n}(\bk)$. In the context of periodic Schr\"{o}dinger operators, the inverse Bloch transform of such a frame yields \emph{localised Wannier functions}.

When studying gapped systems, or {\em insulators}, with $N$ electrons per unit cell and Fermi level $\varepsilon_{F}$, we have $\sup_{\bk \in \BZ} \varepsilon_{N,\bk} < \varepsilon_{F} < \inf_{\bk \in \BZ} \varepsilon_{N+1,\bk}$, and so in particular $K_N = \emptyset$. In this case, $P_{N}(\bk)$ is well-defined and smooth on the whole Brillouin zone $\BZ$. The existence of a smooth Bloch frame for $P_N$ is then equivalent, {in two and three dimensions,} to the vanishing of the Chern class of the Bloch bundle associated with $P_{N}$, a condition that is always satisfied whenever $H$ (and hence $P_N$) is TRS~\cite{panati2007triviality,brouder2007exponential,monaco2015symmetry,monaco2017chern}.

In the present article we are interested in the case of {\em metallic  systems}, in which case $K_{N}$ is not empty. In this case, the family of projections $P_N$ is not smooth over the whole Brillouin zone $\BZ$. It is therefore impossible to represent $\Ran P_{N}$ with a smooth Bloch frame of rank $N$. As was already discussed in the introduction, an exact representation of that space is often not needed for numerical purposes; rather, as is the case with Wannier interpolation \cite{yates2007spectral}, one only needs a smooth representation of a space containing $\Ran P_{N}$. For instance, if $K_{N+1} = \emptyset$, then we could simply use $P_{N+1}$. However, if bands $N+1$ and $N+2$ cross ($K_{N+1} \neq \emptyset$), $P_{N+1}$ will not be smooth; in general, all the bands might cross and there might not exist a $K > 0$ such that $P_{N+K}$ is smooth.

Following \cite{souza2001maximally}, we consider {\em disentangled Wannier functions}, and look for a projector $P$ of rank $N+K$ whose range contains the one of $P_N$, and which is smooth over the whole Brillouin zone $\BZ$. In the presence of high-degeneracy crossings, $K$ might have to be large to ensure smoothness of $P$. However, by increasing $N$, one can always assume that $K_{N} \cap K_{N+1} = \emptyset$, in which case $K=1$ is enough. Therefore, in the following we will look for $P$ of rank $N+1$. 

\subsection{Assumptions and statement of the main results}
We work in dimension $d = 3$, and exclude severely degenerate cases (see the discussion below in Section~\ref{sec:discuss}). More  precisely, we make the following assumption:
\begin{center}
{\bf Assumption 1:} {\em The sets $K_N$ and $K_{N+1}$ are finite unions of isolated points and piecewise smooth curves, and $K_{N} \cap K_{N+1} = \emptyset$.}
\end{center}

We will also be interested in limiting the span of $P$. This is common practice numerically, where only a finite number of bands are computed, and the range of $P$ is constrained to lie in a prescribed set of bands (called an ``outer window''). We will impose that $\Ran P(\bk) \subset \Ran P_{N+2}(\bk)$, and hence require the following assumption:
\begin{center}
{\bf Assumption 2:} {\em The set $K_{N+2}$ is a finite union of isolated points and piecewise smooth curves, and $K_{N+1} \cap K_{N+2} = \emptyset$.}
\end{center}

We are finally in position to state our main result. Recall that $d = 3$.

\begin{theorem} \label{th:main} 
Under Assumption 1, there exists a family of rank-$(N+1)$ projectors ${P} = P(\bk)$ which is smooth on $\BZ$ and such that $\Ran P_{N}(\bk) \subset \Ran {P}(\bk)$ for all $\bk \in \BZ \setminus K_{N}$.

In addition, if Assumption 2 also holds, then $P$ can be chosen such that $\Ran P(\bk) \subset \Ran P_{N+2} (\bk)$ for all $\bk \in \BZ \setminus K_{N+2}$. 

Finally, if the system is TRS, then ${P}$ can be chosen TRS.
\end{theorem}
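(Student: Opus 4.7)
The plan is to exploit the disjointness of $K_N$ and $K_{N+1}$ guaranteed by Assumption~1 and construct $P$ by patching two natural local choices. Pick a small tubular neighborhood $V$ of $K_{N+1}$ disjoint from $K_N$. On $\BZ \setminus V$, the projector $P_{N+1}(\bk)$ defined by~\eqref{eq:def:Pn} is smooth (the gap between $\varepsilon_{N+1}$ and $\varepsilon_{N+2}$ is open there) and contains $\Ran P_N(\bk)$ off $K_N$. I would therefore set $P := P_{N+1}$ on $\BZ \setminus V$. The remaining task is to extend $P$ smoothly across $V$, preserving the rank $N+1$ and the inclusion $\Ran P_N \subset \Ran P$ on $V$ (since $K_N \cap V = \emptyset$ and $P_N$ is smooth on $V$).

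On $V$, I would write $P = P_N + Q$ with $Q(\bk)$ a smooth rank-one projector orthogonal to $P_N(\bk)$. Equivalently, I must produce a smooth line subbundle $L$ of the orthogonal complement bundle $P_N^\perp$ over $V$, matching the canonical line bundle $\Ran(P_{N+1}-P_N)$ on $\partial V$. Because $V$ deformation-retracts onto the $1$-complex $K_{N+1}$, its second cohomology vanishes and every complex line bundle on $V$ is trivial, so any line subbundle of $P_N^\perp|_V$ is automatically trivial. A smooth trivializing unit section of $P_N^\perp$ over $V$ then provides the required line, and the large rank of $P_N^\perp$ (infinite in the Schr\"odinger setting) ensures such a section exists.

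The hardest step will be to match this extension with the canonical boundary data on $\partial V$. The line bundle $\Ran(P_{N+1}-P_N)|_{\partial V}$ may carry a nonzero first Chern class (think of small spheres around isolated crossings of $K_{N+1}$ or small tori around curves of crossings), which would obstruct a direct extension into $V$. To overcome this, I would modify the initial choice $P := P_{N+1}$ on a thin shell inside $\BZ \setminus V$ adjacent to $\partial V$: rotating the line $\Ran(P_{N+1}-P_N)$ within $P_N^\perp$ by a smooth unitary, one can cancel the Chern class and render the boundary line bundle trivial, after which the extension into $V$ exists by the previous paragraph. The key topological machinery is the Chern-class analysis of Section~\ref{sec:chern}, and the essential input is that the high rank of $P_N^\perp$ leaves enough room to accommodate the rotation. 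Under Assumption~2, I would carry out all rotations and extensions inside the smooth rank-two bundle $\Ran(P_{N+2}-P_N)$ near $K_{N+1}$, which also trivializes over the $1$-complex $K_{N+1}$. For the TRS refinement, I would take $V$, the cutoffs, the trivializing section, and the rotating unitary all equivariant under $\bk \mapsto -\bk$ conjugated by $\theta$, which is possible since both crossing sets are invariant under $\bk \mapsto -\bk$ and $\theta P_N(\bk) \theta^{-1} = P_N(-\bk)$.
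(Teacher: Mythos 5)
Your overall plan — set $P=P_{N+1}$ away from $K_{N+1}$ and extend a rank-one complement of $P_N$ across a neighbourhood of $K_{N+1}$, controlling the obstruction by Chern-class considerations — is the same as the paper's. You also correctly identify the obstruction: the line bundle $\Ran(P_{N+1}-P_N)$ restricted to the boundary of the neighbourhood may have a nonzero Chern number (e.g.\ $\pm 1$ for a small sphere around a Weyl point), which would prevent a direct extension inward. The gap is in how you try to remove that obstruction.

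The ``shell rotation'' trick cannot work. Suppose you modify $P$ on a collar $A\simeq\partial V\times[0,\varepsilon]$ inside $\BZ\setminus V$, keeping $P$ a smooth projector with $\Ran P_N\subset\Ran P$ and matching $P_{N+1}$ on the outer boundary. Since $K_N$ avoids a neighbourhood of $V$, the projector $P_N$ is smooth on $A$, so $P-P_N$ is a smooth rank-one projector on $A$. Its Berry curvature is a closed $2$-form (Lemma~\ref{lem:sumBerry}), so by Stokes' theorem the Chern numbers on the two boundary components of $A$ are equal: $\Ch(\partial V, P-P_N)=\Ch(\text{outer boundary}, P_{N+1}-P_N)$. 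No smooth unitary conjugation in a shell can change this integer; it is a homotopy invariant, exactly the kind of rigidity Section~\ref{sec:chern} is built to capture. So the obstruction is not removed, and the extension into $V$ fails whenever the boundary Chern number is nonzero, which happens as soon as $K_{N+1}$ contains a Weyl point.

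What actually saves the argument, and what you are missing, is that the \emph{total} Chern number vanishes automatically — a manifestation of the Nielsen--Ninomiya theorem. Writing $\Ch(\partial\Omega,p)=\Ch(\partial\Omega,P_{N+1})-\Ch(\partial\Omega,P_N)$, both terms vanish by Stokes: $P_N$ is smooth on $\Omega$, and $P_{N+1}$ is smooth on $\BZ\setminus\Omega$ (whose boundary inside the torus is $-\partial\Omega$). But this cancellation is only useful if $\Omega$ is \emph{connected}: if $V$ is a disjoint union of tubes around, say, two Weyl points of opposite charge, each component of $\partial V$ carries Chern number $\pm1$ and neither piece extends. The paper therefore does not take a straight tubular neighbourhood of $K_{N+1}$; it joins all the isolated points and curves of $K_{N+1}$ by segments through a base point $\bk^*$ avoiding $K_N$, takes $\Omega$ a neighbourhood of the resulting \emph{connected} set, and then applies the cancellation on the single surface $\partial\Omega$ followed by Lemma~\ref{lem:ChernOnS2}. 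Your correct observations about $H^2(V)=0$ and the size of $P_N^\perp$ are needed for the extension once the obstruction is gone, but they do not by themselves get you past it.
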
 

In the TRS case, and in view of the well-developed theory of insulators \cite{panati2007triviality,brouder2007exponential}, this implies the existence of a Bloch frame for $P$.
\begin{corollary}
Under Assumption 1, and if the system is TRS, then there exists a rank-$(N+1)$ TRS {generalised Bloch frame}, \ie a set of $(N+1)$ orthonormal functions ${\Phi}(\bk) = ({\phi_1}, \ldots, {\phi_{N+1}})(\bk)$ which are smooth on $\BZ$, satisfy the TRS condition $\Phi(-\bk) = \theta \Phi(\bk)$, and such that $\Ran P_{N} \subset \Span \Phi$ for all $\bk \in \BZ \setminus K_{N}$. 

In addition, if the system also satisfies Assumption 2, then $\Phi(\bk)$ may be chosen in $\Ran P_{N+2}(\bk)$ for all $\bk \in \BZ \setminus K_{N+2}$.
\end{corollary}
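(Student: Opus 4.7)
The plan is to reduce the construction of a TRS generalised Bloch frame to the well-developed theory for insulators, applied to the smooth rank-$(N+1)$ projector $P = P(\bk)$ produced by Theorem~\ref{th:main}. Under Assumption~1 and TRS, Theorem~\ref{th:main} yields such a $P$, smooth on all of $\BZ$, satisfying $\theta P(\bk) \theta^{-1} = P(-\bk)$ and $\Ran P_N(\bk) \subset \Ran P(\bk)$ outside $K_N$; under Assumption~2 I may additionally take $\Ran P(\bk) \subset \Ran P_{N+2}(\bk)$ outside $K_{N+2}$. The map $\bk \mapsto \Ran P(\bk)$ then defines a smooth rank-$(N+1)$ Hermitian vector bundle $\mathcal{E} \to \BZ$ endowed with a fibrewise antilinear involution covering $\bk \mapsto -\bk$.

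The first step is to show that $\mathcal{E}$ is smoothly trivial. As recalled in Section~\ref{sec:chern}, over a base of dimension $d \le 3$ the only obstruction to smooth triviality of a complex vector bundle is the first Chern class. The TRS of $P$ forces the Berry curvature of $\mathcal{E}$ to be odd under $\bk \mapsto -\bk$, so its integral over any of the three independent embedded two-tori in $\BZ$ vanishes; hence all Chern numbers of $\mathcal{E}$ are zero, exactly as in the insulating TRS case of \cite{panati2007triviality, brouder2007exponential, monaco2015symmetry}. Consequently $\mathcal{E}$ admits a smooth global frame $\Phi(\bk) = (\phi_1(\bk), \dots, \phi_{N+1}(\bk))$.

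The second step is to make this frame TRS. I would follow the symmetrisation procedure of \cite{fiorenza2016construction, cornean2016construction, monaco2015symmetry, monaco2017chern}: compare $\Phi(\bk)$ with the frame $\theta \Phi(-\bk)$ of the same subspace $\Ran P(\bk)$, obtaining a smooth $\U(N+1)$-valued map $U(\bk)$, and modify $\Phi$ by a smooth gauge transformation $W(\bk)$ built from a suitable logarithm or square root of $U$ so that $\Phi' := \Phi \cdot W$ satisfies $\Phi'(-\bk) = \theta \Phi'(\bk)$. The required inclusions $\Ran P_N \subset \Span \Phi'$ and, under Assumption~2, $\Span \Phi' \subset \Ran P_{N+2}$ are preserved, since $\Phi'$ still spans $\Ran P(\bk)$ throughout.

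The main obstacle is this second step: the complex triviality of $\mathcal{E}$ follows from the Chern-class vanishing in a rather routine way, but producing a gauge $W(\bk)$ that is both smooth on all of $\BZ$ and equivariant under $\bk \mapsto -\bk$ requires a careful analysis around the TRIMs, where the involution has fixed points. This is precisely the technical heart of the TRS Bloch-frame constructions cited above, and since our $P$ is already smooth and TRS on the whole Brillouin zone, those arguments transfer without modification to the present setting.
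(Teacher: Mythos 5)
Your proposal follows essentially the same route as the paper: take the smooth, TRS rank-$(N+1)$ projector $P$ from Theorem~\ref{th:main}, then invoke the insulator theory of \cite{panati2007triviality,brouder2007exponential,fiorenza2016construction,cornean2016construction,monaco2015symmetry} to produce a smooth TRS Bloch frame for $P$, noting that the range constraints $\Ran P_N \subset \Ran P \subset \Ran P_{N+2}$ are inherited automatically. The paper states this even more tersely as a one-line appeal to the literature, so your two-step unpacking (Chern-class vanishing gives a smooth frame; TRS symmetrisation of the obstruction matrix makes it equivariant) is a correct and faithful expansion of the intended argument.
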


The Bloch frame mentioned in the above statement is a set of $(N+1)$ orthogonal TRS vectors spanning the range of $P_{N}$. In particular, their Bloch transforms are real disentangled Wannier functions. By standard arguments, these Wannier functions decay faster than any polynomial (almost-exponential localisation). This is optimal, in the sense that exponential localisation is generically not possible, as we show in Appendix~\ref{app:analyticity}.

The proof of Theorem~\ref{th:main} is deferred to Section~\ref{sec:proof}, where we first prove the statement without the TRS assumption, and then adapt the strategy to preserve this symmetry. In order to make the argument self-contained, we review some well-known facts about families of projections and their topological invariants in Section~\ref{sec:chern}.

\subsection{Discussion on assumptions and statements} \label{sec:discuss} 
Before going into the proof of our main result, we discuss our assumptions, and the relation to the existing literature on disentangled Wannier functions.

\paragraph{Assumptions}

The choice of focusing on the three-dimensional case is motivated by the von Neumann--Wigner theorem \cite{neumann2000behaviour}, which states that band crossings generically do not appear in one- and two-dimensional systems. However, realistic systems do exhibit such degeneracies due to symmetries (such as the symmetry-protected Dirac cones~\cite{fefferman2012honeycomb}). Although we do not do it explicitly, our Theorem~\ref{th:main} easily generalises to these one- and two-dimensional systems as well, and yields the existence of localised Wannier functions when TRS is present. Moreover, by allowing crossing on curves, we are able to treat most symmetry-imposed crossings along lines of high symmetry in three-dimensional systems as well. As far as we know, the only realistic system not covered by Assumption 1 is the free electron gas.

We formulated Theorem \ref{th:main} assuming that the crossing sets are only composed of isolated points and curves. In fact, the minimal assumption that we make in the proof is that there exists an open subset $\Omega$, with smooth boundary, which contains $K_{N+1}$ and such that $\Omega \cap K_N = \emptyset$. This is certainly satisfied under the condition formulated in Assumption 1, but the proof can be generalised to a slightly bigger class of systems, allowing for more general degeneracies, for example surfaces in $K_N$ which do not separate the points in $K_{N+1}$.

\paragraph{Exponential localisation}

In Theorem \ref{th:main} we only prove the existence of a smooth family of projectors and Bloch frames, yielding Wannier functions decaying faster than any polynomial. To obtain exponential localisation, one would need analyticity. This is much stronger than smoothness because of the rigidity properties of analytic functions, and in Appendix \ref{app:analyticity}, we prove that, if there is a Weyl point in the crossing set $K_{N}$ between bands $N$ and $N+1$, then one cannot find a set of $N+1$ exponentially-localised Wannier functions representing the first $N$ bands. Our proof does not generalise to the case of $N+2$ Wannier functions, and it is an interesting question to determine whether exponential localisation is possible in this case.

\paragraph{Relationship to maximally-localised Wannier functions}

The definition of metallic Wannier functions we use corresponds to the one used in the numerical scheme of \cite{souza2001maximally}, and our results therefore appear to validate this method. It is however important to note that we only prove the \textit{existence} of localised Wannier functions, and not that the Wannier functions found in \cite{souza2001maximally} by minimising the variance (the so-called maximally-localised Wannier functions or MLWFs) are almost-exponentially localized. In fact, it was recently shown that MLWFs for the free electron gas in one and two dimensions correspond to a gauge that is continuous but does not have continuous first derivatives \cite{damle2017}, yielding a decay of the Wannier functions as $|\br|^{-2}$. We expect these results to hold generically for more complicated systems. It is an interesting open question to find a numerical scheme that yields almost-exponentially localised Wannier functions for metallic systems (see \cite{damle2017} for the case of the one-dimensional free electron gas).

Let us also notice that there is a slight difference in our setup compared to that of \cite{souza2001maximally}. In our case, we look for a projector that spans a given number of bands $N$, while in practice one usually tries to reproduce an energy window (the so-called {\em frozen inner window}). Our results can easily be adapted to reproduce a given window by choosing $N$ appropriately. The fact that the projector $P(\bk)$ may be chosen in the range of $P_{N+2}(\bk)$ also gives a natural {\em outer window} of energy.

Finally, we only consider the case where the bands of interest are the $N$ first bands of $H$. Our results extend trivially to the case where the bands of interest lie inside the spectrum of $H$, and are to be disentangled from below as well as from above. In this case, we show that the addition of two extra Wannier functions (one to disentangle the bands from below, and one from above) suffices to disentangle the $N$ bands of interest: namely, we prove the existence of $N+2$ localised Wannier functions that span those bands. In this case, due to the absence of a Rayleigh-Ritz principle, it is not as easy as before to recover the $N$ eigenvalues of interest, as they are not the smallest eigenvalues of the matrix $A_{mn}(\bk) = \langle u_{m\bk},H(\bk) u_{n\bk}\rangle$.

\paragraph{Parseval frames}

Without time-reversal symmetry, the existence of localised Wannier functions for the rank-$(N+1)$ projection $P$ might be topologically obstructed. In this case, if one is willing to give up orthonormality, then it is possible to span the range of $P$ with a ``redundant basis'' (more formally, a \emph{Parseval frame}) consisting of $N+2$ Bloch functions, which are smooth and periodic as functions of $\bk$ \cite{cornean2017parseval,auckly2017parseval}.

\section{Chern numbers and extensions of projectors} \label{sec:chern}

We provide in this section a brief introduction to the concepts of Berry curvature and Chern numbers (see also~\cite{simon1983holonomy}). These objects characterise the possibility of extending a family of projectors defined on the boundary of a three-dimensional open set to its interior. This plays a central role in our proof of Theorem \ref{th:main}. 

Let $\Omega$ be an open set in $\R^3$, and let $\Phi(\bk) = (\phi_1, \cdots, \phi_n)(\bk) \in \cH^n \simeq \cM_{M \times n}(\C)$ be a smooth family on $\Omega$ of $n$ orthonormal vectors (we will write $\Phi^* \Phi= \bbI_n$). The rank-$n$ projector on $\Span \left\{ \phi_1(\bk), \cdots, \phi_n(\bk) \right\}$ is $P := \Phi \Phi^*$, and we say that $\Phi$ is a frame for $P$. The (non-abelian) {\em connection} of the frame $\Phi$ is the matrix-valued $1$-form
\[
\cA[\Phi] := - \ri \Phi^{*} \rd \Phi, \quad \text{so that} \quad \cA_{k,l}[\Phi] (\bk) = \sum_{i=1}^3 - \ri  \langle \phi_k(\bk), \partial_{k_i} \phi_l(\bk) \rangle \rd k_i.
\]
The (abelian) {\em Berry curvature} of the frame $\Phi$ is the real-valued $2$-form $\cF[\Phi] := \rd \Tr \cA[\Phi]$. Since the Berry curvature at a point $\bk$ quantifies the amount of Berry phase produced in an infinitesimally small loop around $\bk$, a quantity which is gauge-invariant, the Berry curvature may be expressed with respect to the gauge-invariant projectors $P(\bk)$ only (without reference to the underlying gauge-dependent frame $\Phi$). Indeed, a straightforward calculation shows that 
\begin{align*}
\cF[\Phi] & = \rd \Tr \cA[\Phi] =  - \ri  \Tr(  (\rd \Phi^*) \wedge (\rd \Phi )  ) = - \ri  \Tr(\Phi \Phi^* \left[ (\rd \Phi ) \Phi^* + \Phi (\rd \Phi^*)\right] \wedge \left[ (\rd \Phi ) \Phi^* + \Phi (\rd \Phi^*)\right] )\\
&  = - \ri  \Tr(P \rd P \wedge \rd P).
\end{align*}

We therefore extend the notion of Berry curvature to any smooth family of projectors $P(\bk)$ by $\cF[P] := - \ri \Tr(P \rd P \wedge \rd P)$. We summarise a few well-known properties of this curvature, which we prove for completeness.

\begin{lemma} \label{lem:sumBerry}
\begin{enumerate}[(i)]
\item (Bianchi's identity) Let $P(\bk)$ be a smooth family of orthogonal projectors. Then the Berry curvature of $P$ is closed (\ie $\rd \cF[P] = 0$).
\item Let $P(\bk)$ and $Q(\bk)$ be smooth families of orthogonal projectors such that $P \perp Q$. Then $ \cF[P+Q] = \cF[P] + \cF[Q]$.
\end{enumerate}
\end{lemma}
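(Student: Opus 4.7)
The plan is to derive both parts from the intrinsic formula $\cF[P] = -\ri\Tr(P\,\rd P\wedge \rd P)$, together with the algebraic identities following from $P^{2}=P$ (and, for part (ii), from $PQ = QP = 0$).

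For (i), $\rd$ commutes with $\Tr$ and $\rd^{2}=0$, so
\[
\rd\cF[P] = -\ri\,\Tr\bigl(\rd P\wedge \rd P\wedge \rd P\bigr).
\]
Differentiating the idempotency relation $P^{2}=P$ gives $P(\rd P)+(\rd P)P=\rd P$, and thus $P(\rd P)P=0$. This lets me write $\rd P = A+B$ with $A := P(\rd P)(\bbI-P)$ and $B := (\bbI-P)(\rd P)P$. Since $P(\bbI-P)=0$, both $A\wedge A$ and $B\wedge B$ vanish, and expanding $(A+B)^{\wedge 3}$ leaves only
\[
(\rd P)^{\wedge 3} = A\wedge B\wedge A + B\wedge A\wedge B.
\]
Graded cyclicity of the trace, $\Tr(\omega\wedge \eta) = (-1)^{|\omega||\eta|}\Tr(\eta\wedge\omega)$, applied with $\omega = A$ (a $1$-form) and $\eta = B\wedge A$ (a $2$-form), gives $\Tr(A\wedge B\wedge A) = \Tr(B\wedge A\wedge A) = 0$, and similarly $\Tr(B\wedge A\wedge B) = 0$; hence $\rd\cF[P] = 0$.

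For (ii), I would work locally via frames. Around any point, smooth orthonormal frames $\Phi$ for $\Ran P$ and $\Psi$ for $\Ran Q$ exist on some neighbourhood. Since $PQ=0$ forces the columns of $\Phi$ and $\Psi$ to be mutually orthogonal, $(\Phi,\Psi)$ is a smooth orthonormal frame for $P+Q$. The connection then takes the block form
\[
\cA[(\Phi,\Psi)] = -\ri \begin{pmatrix} \Phi^{*}\rd\Phi & \Phi^{*}\rd\Psi \\ \Psi^{*}\rd\Phi & \Psi^{*}\rd\Psi \end{pmatrix},
\]
whose trace depends only on the diagonal blocks, so $\Tr\cA[(\Phi,\Psi)] = \Tr\cA[\Phi] + \Tr\cA[\Psi]$. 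Applying $\rd$ yields $\cF[P+Q] = \cF[P] + \cF[Q]$ on the coordinate patch, and since each side is intrinsically determined by the projectors alone (as just recalled in the preceding paragraph of the text), the identity holds globally.

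The only real subtlety is the sign bookkeeping from graded commutativity in part (i); once $A^{2} = B^{2} = 0$ is recognised, the argument reduces to moving a single $1$-form past a $2$-form under the trace. Part (ii) is a purely algebraic consequence of the block structure, with no topological content.
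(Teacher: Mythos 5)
Your proposal is correct, and for part (i) it takes a genuinely different route from the paper. The paper handles both parts by constructing a local orthonormal frame $\Phi(\bk) = P(\bk)\Phi(\bk_0)\bigl[(P(\bk)\Phi(\bk_0))^*(P(\bk)\Phi(\bk_0))\bigr]^{-1/2}$ near each point, so that $\cF[P]$ is locally exact as $\rd\Tr\cA[\Phi]$; Bianchi's identity then reads as $\rd^2 = 0$. You instead work directly with the intrinsic formula: differentiating $P^2 = P$ gives $P(\rd P)P = 0 = (\bbI - P)(\rd P)(\bbI - P)$, hence $\rd P = A + B$ with $A = P(\rd P)(\bbI - P)$ and $B = (\bbI - P)(\rd P)P$ both nilpotent under $\wedge$, and graded cyclicity of the trace kills $\Tr(A\wedge B\wedge A)$ and $\Tr(B\wedge A\wedge B)$, the only survivors in $(\rd P)^{\wedge 3}$. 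Your algebraic argument is slightly more self-contained (no need to justify the smoothness and invertibility of the overlap in the local-frame formula) and makes the identity manifestly gauge-free, at the cost of some $\Z$-graded sign bookkeeping; the paper's proof is shorter given that the local-frame construction is reused in Lemma~\ref{lem:ChernOnS2} anyway. For part (ii) your argument, via block structure of $\cA[(\Phi,\Psi)]$ for local frames $\Phi$ of $P$ and $\Psi$ of $Q$, coincides with the paper's. One small remark: you could also prove (ii) intrinsically by expanding $\Tr\bigl((P+Q)\,\rd(P+Q)\wedge\rd(P+Q)\bigr)$ and using $PQ = QP = 0$ (hence $P\,\rd Q\,Q = -(\rd P)Q^2 = -(\rd P)Q$, etc.) to show the cross terms vanish under the trace, which would keep the whole lemma frame-free.
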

\begin{proof}
Since the statements are of a local nature, we focus on a point $\bk_0$ and look at points $\bk$ which are sufficiently close to $\bk_0$. In this neighbourhood, a local frame for $P$ can be constructed. Indeed, let $\Phi(\bk_0)$ be any basis of $\Ran P(\bk_0)$. Then, for all $\bk$ close to $\bk_0$, the family
\begin{equation} \label{eq:Phik} 
\Phi(\bk) = P(\bk) \Phi(\bk_{0}) \left[ (P(\bk) \Phi(\bk_{0}))^{*} (P(\bk) \Phi(\bk_{0}))\right]^{-1/2}
\end{equation}
is well-defined, satisfies $\Phi^* \Phi = \bbI_n$ and $\Phi \Phi^* = P$, and hence is a (local) frame for $P$.
 
\begin{enumerate}[(i)]
\item  From \eqref{eq:Phik} we deduce that $\cF[P](\bk_0) = \cF[\Phi](\bk_0) =\rd \Tr \cA[\Phi](\bk_0)$, so that $\rd \cF[P](\bk_0) = \rd^2 \Tr \cA[\Phi](\bk_0) = 0$.  
\item Let $\Phi_P(\bk)$ (resp. $\Phi_Q(\bk)$) be defined as in~\eqref{eq:Phik}, so that it is a frame for $P(\bk)$ (resp. $Q(\bk)$) for $\bk$ sufficiently close to $\bk_0$. Since $P \perp Q$, the family $(\Phi_P, \Phi_Q)$ is a frame for $P+Q$. In particular, $\Tr \cA[(\Phi_P, \Phi_Q)] = \Tr \cA[\Phi_P] + \Tr \cA[\Phi_Q]$, and the result follows.
\end{enumerate}
\end{proof}

If $S$ is a closed compact surface, and if $P$ is smooth on $S$, then the \emph{Chern number} of $P$ on $S$ is
\[
\Ch(S,P) := \dfrac{1}{2\pi} \int_S \cF[P].
\]

Let $\Omega$ be a bounded connected open set in $\R^{3}$. From the Bianchi's identity together with Stokes theorem, any smooth family of projectors $P(\bk)$ defined on $\Omega$ satisfies $\int_{\partial \Omega} \cF[P] = \int_{\Omega} \rd \cF[P] = 0$, so that its Chern number $\Ch(\partial \Omega, P)$ vanishes. Therefore, a necessary condition for a projector defined on $\partial \Omega$ to have a smooth extension to $\Omega$ is to have a null Chern number. This condition is not only necessary but also sufficient, as we prove in the next Lemma, which is crucial to the proof of Theorem \ref{th:main}.

\begin{lemma} \label{lem:ChernOnS2}
Let $\Omega \subset \R^3$ be a bounded {connected} open set such that its boundary $\partial \Omega$ is a smooth, compact, oriented surface. Let $Q(\bk)$ be a smooth family of orthogonal projectors over $\Omega$, and let $P(\bK)$ be a smooth family of orthogonal projectors defined over $\partial \Omega$ such that $\Ran P \subset \Ran Q$ on $\partial \Omega$. The following propositions are equivalent:
\begin{enumerate}[(1)]
\item \label{item:Ch=0} the Chern number of $P$ on $\partial \Omega$ vanishes: $\Ch(\partial \Omega, P) = 0$;
\item \label{item:frame} there exists a smooth frame $\Phi$ over $\partial \Omega$ such that $P = \Phi \Phi^*$ on $\partial \Omega$;
\item \label{item:extension} there exists a smooth extension of $P$ on $\Omega$ such that $\Ran P \subset \Ran Q$ on $\Omega$.
\end{enumerate}
\end{lemma}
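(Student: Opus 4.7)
My plan is to prove the three conditions equivalent via the cycle $(3)\Rightarrow(1)\Rightarrow(2)\Rightarrow(3)$. The implication $(3)\Rightarrow(1)$ is immediate from Bianchi's identity (Lemma~\ref{lem:sumBerry}(i)) combined with Stokes' theorem: once $P$ has been extended smoothly to $\Omega$, the form $\cF[P]$ is a closed smooth $2$-form on $\Omega$, so $2\pi\,\Ch(\partial\Omega,P)=\int_{\partial\Omega}\cF[P]=\int_\Omega \rd \cF[P]=0$.

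For $(1)\Rightarrow(2)$ I would invoke the classical classification of smooth complex vector bundles over a (connected) closed oriented surface: such bundles are determined up to smooth isomorphism by their first Chern number, and in particular a rank-$n$ bundle is trivial (\ie admits a global smooth orthonormal frame) iff its Chern number vanishes. Applied to the Bloch bundle $\Ran P\to\partial\Omega$, the assumption $\Ch(\partial\Omega,P)=0$ produces the desired smooth frame $\Phi$; since $\Ran\Phi=\Ran P\subset\Ran Q$ by hypothesis, the required inclusion is automatic. (Should $\partial\Omega$ be disconnected, the argument is applied on each component under the reading that $\Ch$ vanishes on each; in the applications to Theorem~\ref{th:main}, $\partial\Omega$ will be either a sphere or a torus and this subtlety does not arise.)

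The main step, $(2)\Rightarrow(3)$, is an obstruction-theoretic extension problem and constitutes the main obstacle of the proof. Since $Q$ is smooth on $\bar\Omega$, the set of rank-$n$ subprojectors of $Q$ assembles into a smooth fiber bundle $E\to\bar\Omega$ with typical fiber the Grassmannian $\mathrm{Gr}_n(\C^m)$, where $m=\mathrm{rank}\,Q$, and the boundary data $P$ is a smooth section of $E|_{\partial\Omega}$. Because $\mathrm{Gr}_n(\C^m)$ is simply connected with $\pi_2\cong\Z$ generated by the first Chern class, and $\bar\Omega$ is three-dimensional, standard obstruction theory produces exactly one primary obstruction to extending the section, living in
\[
H^3(\bar\Omega,\partial\Omega;\pi_2(\mathrm{Gr}_n(\C^m)))\cong H_0(\bar\Omega;\Z)\cong\Z,
\]
by Lefschetz duality together with connectedness of $\bar\Omega$. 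The hardest piece is the identification of this integer with the de Rham quantity $(2\pi)^{-1}\int_{\partial\Omega}\cF[P]=\Ch(\partial\Omega,P)$. I would justify this via Chern--Weil theory for the tautological rank-$n$ bundle on $\mathrm{Gr}_n(\C^m)$, or, more concretely in the model case $\bar\Omega\cong D^3$, $\partial\Omega\cong S^2$, through the observation that a map $S^2\to\mathrm{Gr}_n(\C^m)$ extends smoothly to $D^3$ iff it is null-homotopic iff its class in $\pi_2\cong\Z$ — detected exactly by the Chern number of the pulled-back tautological bundle — vanishes; the general case of a bounded $\bar\Omega$ with smooth boundary is handled by reducing to this one via a handle decomposition. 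The hypothesis $\Ch(\partial\Omega,P)=0$ then delivers the required extension, closing the cycle.
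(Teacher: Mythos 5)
Your proof is correct in substance but takes a genuinely different route from the paper's, which the authors actually acknowledge is available (``this can be argued by means of standard techniques from algebraic topology'') before deliberately choosing an explicit, elementary construction instead. For $(1)\Rightarrow(2)$ you cite the classification of complex vector bundles over closed oriented surfaces by their Chern number, whereas the paper builds a frame by parallel transport on the two hemispheres (or on the $4g$-gon in Appendix~B), cancels the winding of the transition matrix $U_{\rm obs}$ by explicit contraction in $\mathrm{SU}(N_P)$, and then smooths the result by mollification; your approach is shorter but relies on a deeper imported theorem, while theirs is longer but self-contained and, in fact, algorithmic. For $(2)\Rightarrow(3)$ you set up the obstruction-theoretic formulation with the Grassmannian bundle $E\to\overline\Omega$ and the single primary obstruction in $H^3(\overline\Omega,\partial\Omega;\pi_2(\mathrm{Gr}_n))\cong\Z$, identified with the Chern number; the paper instead reduces to $Q\equiv\mathrm{Id}$ by parallel transport along rays, then runs an induction on $\mathrm{rank}\,P$ where each rank-$1$ piece is contracted to a point $-\phi^*$ not in the image of $\phi$, supplied by Sard's lemma. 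One small structural point: as written, your third step uses the hypothesis $\Ch(\partial\Omega,P)=0$ rather than the existence of a frame, so strictly you are proving $(1)\Rightarrow(3)$; the cycle still closes because $(2)\Rightarrow(1)$ is immediate (a global frame makes $\cF[P]=\rd\Tr\cA[\Phi]$ exact on the closed surface $\partial\Omega$, so its integral vanishes), but it would be cleaner to say so explicitly. Also, the identification of the obstruction class with the de Rham Chern number, and the reduction of the higher-genus case to the disk via handle decomposition, are both stated rather than proved and would need some care to make rigorous — this is precisely the content the paper fills in by hand in Section~3 and Appendix~B.
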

The proof of Lemma~\ref{lem:ChernOnS2} involves classical techniques from topology. We provide a short ``hands-on'' proof for completeness in the case of the three-dimensional ball $\Omega = \DD^3$, and sketch the modifications necessary to adapt this argument to the general case in Appendix~\ref{app:genus}.

\begin{proof}[Proof of Lemma~\ref{lem:ChernOnS2} for $\Omega=\DD^3$.] 
In the sequel we denote by $N_P$ the rank of $P$ and by $N_Q\geq N_P$ the rank of $Q$. We work in spherical coordinates: a point $\bk \in \DD^3$ is written as $\bk = r \bK$ with $r := | \bk | \in [0,1]$ and $\bK \in \bbS^2$.

\medskip

The fact that~\eqref{item:extension}$\implies$\eqref{item:Ch=0} is a consequence of Bianchi's identity together with Stokes' theorem, as was noted above. 

\medskip

\noindent {\bf Step 1: \eqref{item:Ch=0}$\implies$\eqref{item:frame}}.
We recall that if $P(s) \equiv P(\gamma(s))$ is a smooth family of projectors defined on a path $\gamma(s)$, and if $\Psi_0 = (\psi_1, \cdots, \psi_N)(s_0)$ is a frame for some $s_0$, then we can {\em parallel transport} $\Psi$ with $P$ along $\gamma$: the solution to
\begin{equation} \label{eq:parallelTransport}
\left\{ \begin{aligned}
& \frac{\rd \Psi}{\rd s}  = \left[ \frac{\rd P}{\rd s}, P \right] \Psi, \\
& \Psi(s_0)  = \Psi_0,
\end{aligned}
\right.
\end{equation}
satisfies $\frac{\rd(\Psi^* \Psi)}{\rd s} = 0$ and $\frac{\rd(\Psi^* P \Psi)}{\rd s} = 0$. In particular, $\Psi^* \Psi = \bbI_N$, \ie $\Psi(s) = (\psi_1, \cdots, \psi_M)(s)$ is a smooth frame. In addition, we have $\Psi(s)^* P(s) \Psi(s) = \Psi(s_0)^* P(s_0) \Psi(s_0)$, so that if $P(s_0) = \Psi(s_0) \Psi(s_0)^*$, then $P(s) = \Psi(s) \Psi(s)^*$ for all $s$.

\medskip

\noindent {\bf Step 1a: Construction of a continuous frame}.
Let $N_P$ be the rank of $P$. We divide the sphere $\bbS^2$ as $\bbS^2= \bbS^2_+ \cup \bbS^2_- \cup \cE$, where $\bbS^2_+$ is the northern hemisphere, $\bbS^2_-$ is the southern one, and $\cE$ is the (oriented) equator $\cE := \partial \bbS^2_+ = - \partial \bbS^2_-$. We fix two frames of $P$, one at the north pole, and the other one at the south pole, and we parallel transport these bases with the projector $P$ along the meridians of $\bbS^2$ (see~\eqref{eq:parallelTransport}). This process generates two bases $\Phi_{\pm}$ on $\bbS^{2}_{\pm}$. At the equator $\bK \in \cE$, both these bases span the range of $ P(\bK)$. We infer that there exists a unitary
$U_\obs(\bK) \in \U(N_P)$, defined on the equator $\bK \in \cE$, which is smooth and such that $\Phi_{+}(\bK) = \Phi_{-}(\bK) U_\obs(\bK)$.

We now attempt to contract continuously the loop $(U_\obs(\bK))_{\bK \in \cE}$ to the identity matrix in $\U(N_P)$. A necessary {(and actually, as we will prove, also sufficient)} condition for this to be possible is that the \emph{winding number} $W$ of its determinant vanishes. This winding number is given by
\[ W = \frac 1 {2\pi \ri} \int_{\cE} (\det U_\obs)^{-1} \rd \left( \det U_\obs \right) = \frac 1 {2\pi \ri}\int_{\cE} \Tr(U_\obs^{*} \rd U_\obs) \in \Z. \]
On the other hand, we can compute the connection of $\Phi_{+}$ on the equator, and get
\[ \cA[\Phi_+] = -\ri \Phi_{+}^{*} \rd \Phi_{+} = -\ri U_\obs^{*}\Phi_{-}^{*} (\Phi_{-} \rd U_\obs +\rd \Phi_{-} U_\obs) = U_\obs^* \cA[\Phi_-] U_\obs - \ri U_\obs^{*} \rd U_\obs. \]
Together with the Stokes theorem, this leads to%
\footnote{This calculation also proves that the Chern number, being equal to a winding number, is an integer.}%
\begin{equation}\label{hc101}
\begin{aligned}
W &= \frac 1 {2\pi} \int_{\cE} \Tr (\cA[\Phi_{+}] - \cA[\Phi_{-}]) = \frac 1 {2\pi} \left( \int_{\bbS^2_+} \cF[\Phi_+] + \int_{\bbS^2_-} \cF[\Phi_-] \right) = \frac{1}{2\pi} \int_{\bbS^2} \cF[P] \\
  &=\Ch(\bbS^2, P) = 0.
\end{aligned}
\end{equation}
We deduce that we can write $\det (U_{\obs}) = \exp(\ri \phi(\bK))$, where $\phi$ is continuous and periodic (compare \eg \cite[Lemma~2.13]{cornean2016construction}). Then $U_{\obs}(\bK) \exp(-\ri \phi(\bK)/N_{p})$ belongs to $\SU(N_{P})$ and can be continuously contracted to the identity%
\footnote{Explicitly deforming loops in $\SU(N_P)$ is not trivial. Several algorithms exist, such as the one-step logarithm method~\cite{fiorenza2016construction,cances2017robust}, the two-steps logarithm method~\cite{cornean2016construction, cornean2017_3D} and the recursive method~\cite{Gontier2017todo} for the general case.}%
; thus the same is true for $U_{\obs}(\bK)$. Parametrising such a contraction with respect to the latitude on the southern hemisphere we deduce that there exists a continuous map $\widetilde{U}$ on $\bbS^2_-$ which equals the identity at the south pole and satisfies the boundary condition $\widetilde{U} = U_\obs$ on $\cE$. The map $\Phi$ defined by $\Phi := \Phi_{+}$ on $\bbS^2_+$ and by $\Phi := \Phi_{-} \widetilde{U}$ on $\bbS^2_-$ is a continuous frame for $P$ on the whole sphere $\bbS^{2}$. 

\medskip

\noindent {\bf Step 1b: smoothing the frame on $\bbS^2$}. 
We extend (discontinuously) the basis to $\mathbb{R}^3$ by $\Phi(r\bK):=\Phi(\bK)$. Let $g\in C_0^\infty(\mathbb{R}^3)$,  $g\geq 0$, with $\int_{\mathbb{R}^3}g(\bk) \rd \bk=1$. Let $ \delta>0$ and define  $g_\delta(\bk)=\delta^{-3} g(\delta^{-1} \bk)$. If $\delta$ is small enough, the map $\Phi_\delta(\bK):=\int_{\mathbb{R}^3} g_\delta(\bK-\bk')\Phi(\bk') \rd \bk'$ is smooth and close in norm to $\Phi(\bK)$, uniformly in  $\bK \in \bbS^2$. Then the $N_P\times N_P$ overlap matrix  $G(\bK):=\Phi_\delta(\bK)^* P(\bK) \Phi_\delta(\bK)$ is uniformly close to $\bbI_{N_P}$ and the set of vectors  $[P(\bK)\Phi_\delta(\bK)]G(\bK)^{-1/2}$ forms the smooth frame we need. A similar argument has been used in \cite{cornean2016construction}.

\medskip

\noindent {\bf Step 2:~\eqref{item:frame}$\implies$\eqref{item:extension}}.
The idea is to construct a smooth extension for the frame $\Phi$ of $P$. 

\medskip

\noindent {\bf Step 2a: reduction to the case $Q(\bk) \equiv \mathrm{Id}$}.
Let $N_Q$ be the rank of $Q$, and let $\Psi(\bnull) = (\psi_1, \cdots, \psi_{N_Q})(\bnull)$ be a frame of $Q(\bnull)$. We parallel transport the frame $\Psi(\bnull)$ with the projector $Q$ along the rays of $\DD^3$ (see~\eqref{eq:parallelTransport}), and obtain a smooth family of frames $\Psi(\bk)$ for $Q(\bk)$ for all $\bk \in \DD^3$. Since $\Ran P \subset \Ran Q$ on $\bbS^2$, it holds that $P(\bK)$ is of the form $P(\bK) = \Psi(\bK) \widetilde{P}(\bK) \Psi(\bK)^{*}$, where $\widetilde{P}(\bK)$ is a smooth family over $\bbS^2$ of orthogonal projectors acting on $\widetilde{\cH} := \C^{N_Q}$. If $\Phi$ is a frame (in $\cH$) for $P$ on $\bbS^2$, then $\widetilde{\Phi} := \Phi \Psi^*$ is a frame (in $\widetilde{\cH}$) for $\widetilde{P}$ on $\bbS^2$. Conversely, if $\widetilde{\Phi}$ is extended from $\bbS^{2}$ to $\DD^{3}$, then $\Phi$ can be extended by $\Phi := \widetilde{\Phi} \Psi$. It is therefore enough to extend $\widetilde{\Phi}$ on $\DD^3$, with no $Q$-constraint. In the sequel, we drop the tildes, and assume that $Q(\bk) \equiv \mathrm{Id}$.
  
\medskip
  
\noindent {\bf Step 2b: construction of the extension}.
We build this extension inductively with respect to the rank $N_P$ of $P$.

If $N_P= 1$, then we can write $P(\bK) = \phi(\bK) \phi(\bK)^{*}$ for a smooth family of normalised vectors $\phi(\bK)$. If the dimension $M=1$, $\phi(\bK)$ can be chosen constant equal to $1$, and can trivially be extended to $\DD^{3}$. If  $M \geq 2$, the smooth map $\phi(\bK)$ for $\bK \in \bbS^{2}$ defines a two-dimensional surface in the unit sphere of $\cH$, which is isomorphic to $\bbS^{2M-1}$ and has dimension at least $3$. By Sard's lemma \cite[Chap.~1, \S7]{guillemin1974topology}, since $\phi(\bK)$ cannot fill the $2M-1$ dimensional sphere, there exists $-\phi^{*} \in \bbS^{2M-1}$ such that $\phi(\bK) \neq -\phi^{*}$ for all $\bK \in \bbS^{2}$. Let $0\leq h\leq 1$ be a $C^\infty(\R^+)$ cut-off function such that $h(r)=0$ for $r \le 1/4$ and $h(r)=1$ for $r \ge 3/4$. Then the vector 
\begin{equation} \label{eq:contractionTrick}
\widetilde{\phi}(r\bK) := \frac{h(r) \phi(\bK) + (1-h(r)) \phi^{*}}{|h(r) \phi(\bK) + (1-h(r)) \phi^{*}|}
\end{equation}
defines a smooth extension of $\phi$ to $\DD^{3}$. This proves the lemma when $N_P = 1$.

We now assume that the result is true for $N_P = n-1$, and we provide a construction for $N_P = n$. Let $(\phi_{1},\dots,\phi_{n})$ be a smooth frame for $P$ on $\bbS^{2}$. Using the previous construction for the case $N_P = 1$, we can smoothly extend $\phi_{1}$ to $\DD^{3}$. We now set $Q_1 := {\rm Id}_\cH - \phi_{1}\phi_{1}^{*}$ on $\DD^3$. From the reduction to the case $\Ran Q_1=\cH$ in {\bf Step 2a}, and the induction hypothesis, we can extend $P - \phi_{1}\phi_{1}^{*}$ to $\DD^{3}$ while keeping the extension contained inside $Q_1$. This provides the desired extension of $P$.
\end{proof}

We end this section with a result which will be used in conjunction with Lemma~\ref{lem:ChernOnS2} to prove Theorem \ref{th:main} in the TRS case. This time, we want to extend a projector defined on the circle to the disk. Here, no topological obstruction appears (vector bundles on the circle are trivial), but we seek an extension that preserves the TRS property.

\begin{lemma} \label{lem:TRS}
Let $Q(\bk)$ be a smooth family of orthogonal projectors over the two-dimensional disk $\DD^2$, which satisfies the TRS constraint $Q(-\bk) = \theta Q(\bk) \theta^{-1}$, and let $P(\bK)$ be a smooth family of orthogonal projectors over the circle $\bbS^1 := \partial \DD^2$, satisfying the TRS constraint $P(-\bK) = \theta P(\bK) \theta^{-1}$, and such that $\Ran P \subset \Ran Q$ on $\bbS^1$. Then,
\begin{enumerate}[(1)]
\item \label{item:frameonS1} there exists a smooth TRS frame $\Phi$ for $P$ on $\bbS^1$ (\ie $\Phi(-\bK) = \theta \Phi(\bK)$);
\item \label{item:extensiononD2} there exists a smooth TRS extension of $P$ on $\DD^2$ such that $\Ran P \subset \Ran Q$ on $\DD^2$.
\end{enumerate}
\end{lemma}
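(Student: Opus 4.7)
I would adapt the proof of Lemma~\ref{lem:ChernOnS2} to the TRS setting, treating~(1) and~(2) in sequence. For~(1), parametrise $\bbS^1$ by $s \in \R/2\pi\Z$ via $\bK(s) = (\cos s, \sin s)$, so that the antipodal action $\bK \mapsto -\bK$ corresponds to $s \mapsto s + \pi$ and has no fixed point. First construct a smooth frame $\widetilde\Phi$ on the upper semicircle $s \in [0, \pi]$ by parallel transport, as in Step~1 of the proof of Lemma~\ref{lem:ChernOnS2}. Since $\theta \widetilde\Phi(\bK(0))$ and $\widetilde\Phi(\bK(\pi))$ both span $\Ran P(\bK(\pi))$, they differ by a unitary $U \in \U(N_P)$; connecting $\bbI$ to $U$ by a smooth path $V(s) \in \U(N_P)$ (possible since $\U(N_P)$ is path-connected) and replacing $\widetilde\Phi$ by $\Phi := \widetilde\Phi V$ yields a smooth frame on $[0,\pi]$ satisfying the matching condition $\Phi(\bK(\pi)) = \theta \Phi(\bK(0))$. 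I then extend by $\Phi(\bK) := \theta \Phi(-\bK)$ on the lower semicircle; continuity at $s = 0$ and $s = \pi$ follows from the matching condition together with $\theta^2 = \bbI$. To upgrade this continuous TRS frame to a smooth one, I mollify $\Phi$ (extended radially) with a kernel $g$ satisfying $g(-\bk) = g(\bk)$, then reproject and orthonormalise exactly as in Step~1b of the proof of Lemma~\ref{lem:ChernOnS2}. A short computation shows that the symmetry of $g$ preserves $\Phi(-\bK) = \theta \Phi(\bK)$, and that the overlap matrix then satisfies $G(-\bK) = \overline{G(\bK)}$, so the inverse-square-root normalisation also respects TRS.

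For~(2), I first reduce to the case $Q(\bk) \equiv \bbI$ in a TRS-compatible way, exploiting the fact that $\bk = 0$ is the unique TRS-fixed point of $\DD^2$. Since $Q(0)$ is TRS with $\theta^2 = \bbI$, it admits an orthonormal frame $\Psi(0)$ consisting of $\theta$-fixed (``real'') vectors. Parallel-transporting $\Psi(0)$ along the radial rays yields a smooth frame $\Psi(\bk)$ of $Q(\bk)$ on $\DD^2$, and conjugating the parallel-transport ODE by $\theta$ (using the TRS of $Q$) gives $\theta \psi_j(\bk) = \psi_j(-\bk)$ for each $j$. Setting $\widetilde P := \Psi^* P \Psi$, a direct computation shows that the TRS condition on $P$ is equivalent to the ``Real'' condition $\widetilde P(-\bK) = \overline{\widetilde P(\bK)}$ on the reduced space $\C^{N_Q}$. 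It therefore suffices to extend $\widetilde P$ from $\bbS^1$ to $\DD^2$ preserving this Real structure, and then set $P := \Psi \widetilde P \Psi^*$ on $\DD^2$.

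To extend $\widetilde P$, I proceed inductively on its rank, mimicking Step~2b of the proof of Lemma~\ref{lem:ChernOnS2}. For rank~$1$, write $\widetilde P = \widetilde\phi \widetilde\phi^*$ and extend $\widetilde\phi$ via the cut-off formula~\eqref{eq:contractionTrick}, with a reference unit vector $\phi^* \in \C^{N_Q}$ chosen real ($\overline{\phi^*} = \phi^*$) and avoiding $\{-\widetilde\phi(\bK) : \bK \in \bbS^1\}$. Such a $\phi^*$ exists by Sard's lemma whenever $N_Q \geq 2$: the forbidden set is a $1$-dimensional curve in $\bbS^{2N_Q - 1}$, the real unit sphere has dimension $N_Q - 1$, and their generic intersection has negative expected dimension (the case $N_Q = 1$ is trivial, with $P \equiv Q \equiv \bbI$). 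Since $\phi^*$ and the denominator in~\eqref{eq:contractionTrick} are real, the extended $\widetilde\phi(r\bK)$ automatically satisfies $\widetilde\phi(-r\bK) = \overline{\widetilde\phi(r\bK)}$. For higher rank, extend $\phi_1$ as above, observe that $Q_1 := \bbI - \phi_1 \phi_1^*$ inherits the Real structure on $\DD^2$ from $\phi_1$, and apply the induction hypothesis to $\widetilde P - \phi_1 \phi_1^*$ inside $Q_1$.

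The main difficulty is to carry the TRS constraint consistently through each stage: the gauge adjustment at $s = \pi$ in~(1), the existence of a real avoidance vector $\phi^*$ at each rank-$1$ step in~(2), and the TRS inheritance of the intermediate projectors $Q_1$ along the rank induction. No new topological obstruction arises, because $\U(N_P)$ is path-connected and (Real) vector bundles on contractible spaces are trivial.
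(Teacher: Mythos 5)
Your treatment of part~(1), the reduction to $Q \equiv \mathrm{Id}$, and the rank induction all mirror the paper's proof. The gap is in the rank-one extension step, where you always choose the reference vector $\phi^*$ to be real, claiming this is possible by a dimension count. This fails exactly when $N_Q = 2$, which is the case actually used in the proof of Theorem~\ref{th:main} (there $P = p_j$ has rank $1$ and $Q = P_{N+2} - P_N$ has rank $2$). In $\C^2$ the real unit sphere $\sB_\cH \cap \cH_{\rm r}$ is a circle of real dimension $1$, and a TRS curve $\widetilde\phi \colon \bbS^1 \to \bbS^3 \subset \C^2$ (satisfying $\widetilde\phi(-\bK) = \overline{\widetilde\phi(\bK)}$) can cover this circle: for instance $\widetilde\phi(e^{\ri s}) = (\cos 2s, \sin 2s)$ is smooth, TRS, and has image equal to all of the real circle. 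Then no real $\phi^*$ avoids $\{-\widetilde\phi(\bK) : \bK \in \bbS^1\}$ and the contraction formula~\eqref{eq:contractionTrick} is unusable. The ``negative expected dimension'' heuristic does not help because neither the curve nor the real sphere can be perturbed; whenever $\widetilde\phi$ takes a real value, the curve genuinely meets the real sphere.

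The paper handles this by distinguishing two cases. If $\phi(\bK)$ is never a real element for $\bK \in \bbS^1$, then \emph{any} real $\phi^*$ works (no avoidance condition is needed, since $-\phi^*$ real and $\phi(\bK)$ non-real are automatically distinct), and your argument goes through. If $\phi(\bK_0)$ is real for some $\bK_0$, the paper does \emph{not} look for a real $\phi^*$; instead it extends $\phi$ as the constant $\phi(\bK_0)$ along the diameter through $\pm\bK_0$, applies Sard's lemma to find a (possibly non-real) $\phi^*$ avoiding $\phi(\partial \DD^2_+)$, contracts to $\phi^*$ on the upper half-disk $\DD^2_+$, and then obtains the lower half by the reflection $\phi(-\bk) := \theta\phi(\bk)$. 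It is this case split, and in particular the half-disk reflection construction, that your proof is missing.
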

\begin{proof}
We first prove~\eqref{item:frameonS1}. To build a frame for $P$ on $\bbS^1$ we identify $\bbS^1$ with the unit circle of $\C$. We pick a basis at $\bK = 1$, and parallel transport it on the upper half circle $\bbS^1_+$ to get a family of bases $\widetilde{\Phi}$ on $\bbS^1_+$. Then  both $\widetilde{\Phi}(-1)$ and $\theta \widetilde{\Phi}(1)$ span the range of $P(-1)$, so that there exists $U_{\obs} \in \U(N_P)$ such that $\widetilde{\Phi}(-1) = \theta \widetilde{\Phi}(1) U_\obs$. The family $\Phi$ defined for $\bK = \exp(\ri \alpha) \in \bbS^1_{+}$ by $\Phi(\bK) := \widetilde{\Phi}(\bK) (U_{\obs})^{-\alpha/\pi}$ and extended to the lower half circle by $\Phi(-\bK) := \theta \Phi(\bK)$ is a continuous frame for $P$ on $\bbS^1$. We can then smooth it following {\bf Step 1b} of the proof of Lemma~\ref{lem:ChernOnS2}, which preserves the TRS property when the mollifier $g$ is even.

We now prove~\eqref{item:extensiononD2}, and first prove that we can choose $\Ran Q = \cH$, as in {\bf Step 2a}. We introduce the set of {\em real elements} of $\cH$ defined by
\[
\cH_{\rm r} := \left\{ v \in \cH: \ \theta v = v \right\}.
\]
{Let  $\Psi_Q(\bnull) = (v_1, \cdots, v_{N_Q})(\bnull)$ be a frame for $Q(\bnull)$.  Since $Q$ is TRS, it holds that ${Q(\bnull)\theta  = \theta Q(\bnull) }$ hence $ \theta \Psi_Q(\bnull)$ remains in the range of $Q(\bnull)$.  The vectors $v_j+\theta v_j$ and $\ri (v_j-\theta v_j)$ belong to  $\cH_{\rm r}$ and span the range of $Q(\bnull)$.  Using the Gram-Schmidt algorithm (which preserves ``reality'') we may find a {\em real} frame of $Q(\bnull)$, still denoted by $\Psi_Q$. Also, since the parallel transport~\eqref{eq:parallelTransport} preserves the TRS property (in the sense that the solution $\Psi_{Q}$ satisfies ${\Psi_Q(-\bk) \theta= \theta \Psi_Q(\bk)}$), we deduce as before that we can take $\Ran Q = \cH$ without loss of generality.

It remains to contract smoothly the vectors of the frame $\Phi$ on $\DD^2$, keeping the TRS constraint. We denote by $\sB_\cH$ the unit ball of $\cH$. Following the induction in {\bf Step 2b} of the proof of Lemma~\ref{lem:ChernOnS2}, it is enough to contract on $\sB_\cH$ a smooth family of vectors $(\phi(\bK))_{\bK \in \bbS^1}$, which satisfies $\phi(-\bK) = \theta \phi(\bK)$, in a TRS way. If $M = 1$, then $\phi(\bK)$ can be chosen constant and real, first in $\bbS^1$, then in $\DD^2$. 

If $M \geq 2$, then $\sB_\cH \cap \cH_{\rm r}$ is of real dimension $M-1$. We distinguish two cases. If $\phi(\bK)$ is never a real element for all $\bK \in \bbS^1$, then we can pick any real element $-\phi^* \in \sB_\cH \cap \cH_{\rm r}$, and construct a smooth extension like in~\eqref{eq:contractionTrick}, which also has the TRS property. 

It remains to study the case where there exists at least one $\bK_0 \in \bbS^1$ such that $\phi(\bK_0)$ is a real element. In this case, we have $\phi(-\bK_0) = \theta \phi(\bK_0) = \phi(\bK_0)$. Without loss of generality, we may assume that $\bK_0 = 1$. We draw the segment $[-\bK_0, \bK_0] = [-1, 1] \subset \DD^2$, and extend continuously $\phi$ on this segment with the constant value $\phi(1)$. Then we consider the decomposition $\DD^2 = \DD^2_+ \cup \DD^2_-$, where $\DD^2_+$ (resp. $\DD^2_-$) is the upper (resp. lower) half disk. In particular, $\DD^2_+$ is homeomorphic to $\DD^2$, and $\phi$ is well-defined, continuous, and piece-wise smooth on its boundary $\partial \DD^2_+$. Thanks to Sard's Lemma, and since $\sB_\cH$ is of dimension at least $2$, while $\phi(\bK)$ is a smooth curve (constantly degenerate on $[-1,1]$)}, there exists $-\phi^*$, which is not necessarily a real element, such that $-\phi^* \neq \phi(\bK)$ for all $\bK \in \partial \DD_+^2$. We can therefore extend continuously $\phi(\bK)$ from the boundary $\partial \DD^2_+$ to the whole upper half disk $\DD^2_+$, mimicking~\eqref{eq:contractionTrick} where the role of the ``origin'' is played by some arbitrary interior point of $\DD^2_+$. We then extend $\phi$ to the lower half disk $\DD^2_-$ by setting $\phi(-\bk) = \theta \phi(\bk)$ for $\bk \in \DD^2_+$. This extension is continuous and TRS.

We finally repair smoothness on $\DD^2$. We first convolve $\phi$ with a smooth and even approximation of the Dirac distribution to obtain a smooth TRS family $\phi_\delta(\bk)$ which is uniformly close in norm to $\phi(\bk)$ on $\DD^2$, and in particular on $\bbS^1$. Let $0<\varepsilon<1$ and consider a smooth even function $0\leq f_\varepsilon \leq 1$ such that $f_\varepsilon(\bk) = 1$ for $|\bk|\leq 1-2 \varepsilon$ while $f_\varepsilon(\bk) = 0$ for $| \bk | \geq 1 - \varepsilon$. Then for sufficiently small $\varepsilon$, the family of vectors 
$$ \phi_{\varepsilon,\delta}(r\bK):=\frac{ (1-f_\varepsilon(r\bK))\phi(\bK) +f_\varepsilon(r\bK)\phi_\delta(r\bK)}{|(1-f_\epsilon(r\bK))\phi(\bK) +f_\epsilon(r\bK)\phi_\delta(r\bK)|}$$
is the smooth TRS extension we need.
\end{proof}

\section{Proof of Theorem~\ref{th:main}} \label{sec:proof}

We now provide the proof of Theorem~\ref{th:main}. We distinguish the cases without and with TRS in two subsections.

\subsection{Without the TRS constraint} \label{ssec:proof_noTRS}
We construct the projector ${P}$ on the Brillouin zone $\cB$ {\em piece by piece}, and glue the pieces in a smooth way. First note that, when $K_{N+1} = \emptyset$, \ie when there are no crossings between the $(N+1)$-st and $(N+2)$-nd bands, Theorem~\ref{th:main} is trivial: simply take $P = P_{N+1}$, which is smooth and whose range contains that of $P_{N}$. Otherwise, our strategy of proof will be to take $P$ equal to $P_{N+1}$ on most of the Brillouin zone except for a small set $\Omega$, and extend it in a smooth way to $\Omega$ using Lemma \ref{lem:ChernOnS2}.

The natural idea would be to take $P$ equal to $P_{N+1}$ everywhere, except in small neighbourhoods near the crossing set $K_{N+1}$, and continue it inside those neighbourhoods. The main difficulty we face is the local topology of the crossings in $K_{N+1}$ (see \eg \cite{monaco2014graphene} and references therein). For concreteness, assume that $\cH$ is of dimension 2 and that $H$ has crossings on a set $K$ composed of isolated points. Assume we want to extend the spectral projector $P$ on the subspace associated to the first eigenvalue of $H$, defined on a neighborhood of $\BZ \setminus K$, to the whole Brillouin zone $\BZ$. Since $\cH$ is of dimension 2, the Hamiltonian can be written as $H(\bk) = V(\bk) \bbI_2 + B(\bk) \cdot \sigma$, where $V(\bk)$ is a scalar, $B(\bk)$ is a three-dimensional vector field, and $\sigma$ denotes the vector of the three usual Pauli matrices. Then $B$ has isolated zeroes on the points of $K$, and it can be checked that the Chern number of $P$ on a small sphere $S$ around a point $\bk_{0} \in K$ is equal to the index of $B$ at that point (defined as the degree of $B(\bk)/|B(\bk)|$ as a map from $S$ to the unit sphere). In particular, if $B'(\bk_{0})$ is non-degenerate, then $\Ch(P,S) = \frac{\det B'(\bk_{0})}{|\det B'(\bk_{0})|} = \pm 1$. Such points are called \textit{Weyl points}, and $\Ch(P,S)$ is called the \textit{Weyl charge}. Because $P$ has a non-zero Chern number at those points, it is impossible to continue $P$ to the interior of $S$, in view of  Lemma \ref{lem:ChernOnS2}.

This example shows that it is in general impossible to extend $P_{N+1}$ in a neighbourhood of a Weyl point, for topological reasons. However, the example above provides the way out. Indeed, the Poincar\'e--Hopf theorem \cite[Chap.~3, \S~5]{guillemin1974topology} states that the sum of the indices of $B$ is equal to the Euler characteristic of $\BZ$, which is zero. In other words, the sum of the Weyl charges is zero, a statement sometimes referred to as the Nielsen--Ninomiya Theorem~\cite{nielsen1981no,friedan1982proof,mathai2017global}. Therefore, while it is \textit{locally} impossible to extend $P$,
there is no \textit{global} topological obstruction%
\footnote{Note that the time-reversal symmetry does not help here, since TRS pairs of Weyl points have the same Weyl charge.}%
. Our strategy of proof is therefore to group the Weyl points together in an open set $\Omega$ which does not intersect with $K_{N}$, to define $P$ equal to $P_{N+1}$ outside this set, and to extend it inside by Lemma \ref{lem:ChernOnS2}, while imposing $\Ran P_{N} \subset \Ran P$.

To construct the set $\Omega$, we use Assumption 1. Let $K_{N+1}^{\mathrm{pt}}$ be the set of isolated points in $K_{N+1}$. Since the sets $K_N$ and $K_{N+1}$ are disjoint and at most 1-dimensional, there exists a point $\bk^* \in \BZ$ such that the segments joining $\bk^*$ with the points of $K_{N+1}^{\mathrm{pt}}$ do not touch the points of $K_{N}$. The union of these segments can be then joined to each of the curves that constitute $K_{N+1} \setminus K_{N+1}^{\mathrm{pt}}$, again without crossing $K_N$. We then define $\Omega$ to be a small neighbourhood of the union of these segments, with smooth boundary, so that $\Omega \cap K_{N} = \emptyset$ (see Figure~\ref{fig:omega}).

\begin{figure}[H]
\centering
\begin{tikzpicture}
\draw (-2.5,-2.5) -- (2.5,-2.5) -- (2.5,2.5) -- (-2.5, 2.5) -- (-2.5, -2.5); \node at (0, 2.7) {$\BZ$};
\foreach \point in {(-1, -2), (0,0), (1, -2), (1.5, 1.5)} \fill[blue] \point rectangle +(0.15, 0.15);
\draw [blue,line width=2pt] (2,-1) rectangle +(-0.5,-.5);
\foreach \point in {(1,1), (2,1), (-1, 2), (-2, -2)}	{
\draw[red!40, line width=10pt, line cap=round] (0,-1) -- \point;
\fill[black] \point circle (2pt);
}
\draw [red!40, line width=10pt] (-2.5,.5) -- (2.5,.5);
\draw [line width=2pt] (-2.5,.5) -- (2.5,.5);
\draw [red!40, line width=10pt] (.5,1.75) ellipse (20pt and 10pt)
      (0,.6) -- (0,1.5);
\draw [line width=2pt] (0.5,1.75) ellipse (20pt and 10pt);
\node[draw, line width=2, black, cross out]  at (0,-1) {};
\node at (0,-1.4) {$\bk^*$};
\fill[black] (3, -2) circle (2pt); \node at (3.7, -2) {$K_{N+1}$};
\fill[blue] (3,-1.5) rectangle +(0.15, 0.15); \node at (3.7, -1.5) {$K_N$};
\node[red] at (1, -0.5) {$\Omega$};
\end{tikzpicture}
\caption{2D sketch of the set $\Omega \subset \BZ$ enclosing the set $K_{N+1}$ but avoiding the set $K_N$.}
\label{fig:omega}
\end{figure}
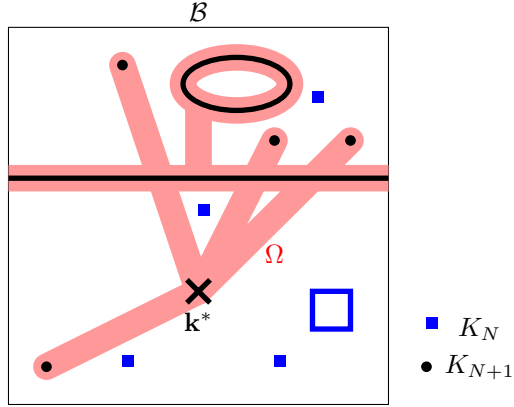

Since $K_{N+1} \subset \Omega$, the projector $P_{N+1}$ is well-defined and smooth on $\BZ \setminus \overline{\Omega}$. On the other hand, since $\Omega \cap K_N  = \emptyset$, the projector $P_N$ is well-defined and smooth on $\Omega$, and is moreover orthogonal to $P_{N+1}$ on $\partial \Omega$. We define, for $\bk \in \partial \Omega$, the rank-$1$ projector $p(\bk) := P_{N+1}(\bk) - P_N(\bk)$. Our goal is to extend $p$, initially defined on the boundary $\partial \Omega$, to the interior $\Omega$, while keeping the constraint $ p \in \Ran P_N^\perp$. Using Lemma~\ref{lem:ChernOnS2}, it is enough to prove that $\Ch(\partial \Omega, p) = 0$. From Lemma~\ref{lem:sumBerry}, we have that
\[
\Ch(\partial \Omega, p) = \Ch(\partial \Omega, P_{N+1}) - \Ch(\partial \Omega, P_N).
\]
But $P_{N}$ is well-defined on $\Omega$, and therefore 
\[
\Ch(\partial \Omega, P_{N}) = \frac 1 {2\pi} \int_{\partial \Omega} \cF[P_{N}] = \frac 1 {2\pi} \int_{\Omega} \rd \cF[P_{N}] = 0.
\]
Similarly, $P_{N+1}$ is well-defined on $\BZ \setminus \Omega$, which has boundary $-\partial \Omega$, and so
\[
\Ch(\partial \Omega, P_{N+1}) = \frac 1 {2\pi} \int_{\partial \Omega} \cF[P_{N+1}] = -\frac 1 {2\pi} \int_{\BZ \setminus \Omega} \rd \cF[P_{N+1}] = 0.
\]
This proves that the total sum of the Weyl charges $\Ch(\partial \Omega, p)$ must vanish, which can be seen as a generalisation of the Nielsen--Niyomiya theorem to the case of line degeneracies.

Altogether, we proved that $\Ch(\partial \Omega, p) = 0$. Together with Lemma~\ref{lem:ChernOnS2}, we can smoothly extend $p$ over $\Omega$ while keeping the constraint $\Ran p \subset \Ran P_{N}^{\perp}$. In conclusion, the map
\[
{P}_{\rm cont} := \begin{cases}
P_{N+1} & \text{on } \BZ \setminus \Omega, \\
P_{N} + p & \text{on } \Omega.
\end{cases} 
\]
is well defined and continuous over $\BZ$. It is smooth on $\Omega$ and on $\BZ \setminus \Omega$, but not necessarily across the boundary $\partial\Omega$. In order to find a smooth extension, we interpolate between the two pieces. 

More specifically, for $\varepsilon > 0$ we define 
\begin{equation} \label{eq:def:Omegaepsilon}
\Omega_\varepsilon := \left\{ \bk \in \Omega, \ \dist(\bk, \partial \Omega) > \varepsilon \right\}.
\end{equation}
When $\varepsilon$ is small enough, it holds that $\overline{\Omega_\varepsilon} \subset \Omega$ and $K_{N+1} \subset \Omega_\varepsilon$. In this case, the operator $P_{N+1}$ is well-defined and smooth on $\BZ \setminus \Omega_\varepsilon$. Let $f_\varepsilon: \BZ \to [0,1]$ be a smooth cut-off function such that $f_\varepsilon(\bk) = 0$ on $\Omega_\varepsilon$ and $f_\varepsilon(\bk) = 1$ on $\BZ \setminus \Omega$, and let
\[
\widetilde{P}_{\varepsilon}(\bk) :=
f_\varepsilon(\bk) P_{N+1}(\bk) + (1 - f_\varepsilon(\bk)) \left( P_N(\bk) + p(\bk) \right).
\]
The map $\bk \mapsto \widetilde{P}_{\varepsilon}(\bk)$ is a smooth family of self-adjoint operators, which is uniformly close in norm to ${P}_{\rm cont}$, and whose range contains the one of $P_N$. In particular, for $\varepsilon$ small enough, $\widetilde{P}_{\varepsilon}$ has exactly $N+1$ eigenvalues above $3/4$, and the remaining eigenvalues below $1/4$. We finally take
\[
{P} := \dfrac{1}{2 \pi \ri} \oint_{\sC} \left( z - \widetilde{P}_{\varepsilon} \right)^{-1} \rd z, 
\]
where $\sC$ is the circle with center $1$, and radius $1/2$. This corresponds to diagonalising {$\widetilde{P}_{\varepsilon}$} and setting the eigenvalues less than $1/2$ to $0$ and those greater than $1/2$ to $1$, and so $P$ is a projector. {Since $(z- widetilde{P}_{\varepsilon})P_N=(z-1)P_N$ we conclude that $PP_N=P_N$. Thus the projection valued map ${P}$ is smooth on $\BZ$ and its range contains the range of $P_N$}. This concludes the first part of the proof of Theorem~\ref{th:main} in the non-TRS case.

When Assumption 2 is also satisfied, then $\Omega$ can be chosen so that it also avoids $K_{N+2}$. In this case, the projector $P_{N+2}$ is well-defined on $\Omega$, and we can look for the extension of the rank-$1$ projector $p$ in $\Ran (P_{N+2} - P_{N})$. Such an extension leads to a smooth family of projectors $P$ on $\Omega$ such that $\Ran P_N \subset P \subset \Ran P_{N+2}$.

\subsection{With the TRS constraint}

We now treat the case where $H$ is TRS. If we could split $\Omega$ into disjoint parts $\Omega_{\pm}$ related by the time-reversal
symmetry operation $S(\bk) = -\bk$ through $S(\Omega^{\pm}) = \Omega^{\mp}$, then we could apply our previous construction to find $P$ on $\Omega_{+}$, and deduce $P$ on $\Omega_{-}$ by symmetry. Unfortunately, this is complicated by possible crossings at time-reversal invariant momenta (TRIMs). Recall that TRIMs are defined as the fixed points of the map $S$ in $\BZ$. There are eight distinct TRIMs in three dimensions, namely $\{0, 1/2\}^3$. We denote by $K_\TRIM$ the TRIM of $K_{N+1}$. If $K_{\TRIM}$ is not empty, then one cannot  find a disjoint TRS pair $\Omega^{\pm}$ containing $K_{N+1}$. Another complication is crossings along curves that pass through the Brillouin zone (an example is the crossing of ${\rm sp}^3$ electrons along the $\Gamma \to X \to \Gamma$ line in semiconductors in zinc-blende structures such as Silicon \cite{cohen1966band}). To solve this problem, we will cut $\Omega$ along TRS-invariant surfaces.

Let $\cC$ be a set of connected smooth curves that contains $K_{N+1}$, avoids $K_N$ (and $K_{N+2}$ if required), and such that $\cC$ is TRS-invariant, in the sense that $S(\cC) = \cC$. This is always possible thanks to Assumption 1. We choose two disjoint smooth surfaces $\Sigma_{1,2} \subset \BZ$ that cut the Brillouin zone in two disjoint components $\BZ_{\pm}$ such that $\BZ_{\pm} = S(\BZ_{\mp})$, and that cross $\cC$ transversally, so that $\cC \cap \{\Sigma_{1} \cup \Sigma_2\}$ is a union of finitely many points $(d_{j})_{1\leq j \leq J}$, among which are the points in $K_{\TRIM}$ (see Figure~\ref{fig:pbSym}). Essentially, $\Sigma_{1,2}$ are the two planes $\{x_3 = 0\}$ and $\{x_3 = 1/2\}$, that are deformed so that they cross $\cC$ at a finite number of points. Let $\Omega$ be a neighbourhood of $\cC$ with smooth boundary such that $S(\Omega) = \Omega$. When the neighbourhood is sufficiently close to $\cC$, $\Omega$ does not intersect $K_{N}$ nor $K_{N+2}$, and $\Omega \cap \{\Sigma_{1} \cup \Sigma_2\}$ is the disjoint union of $J$ sets $\cD_{j}$, with each $\cD_{j}$ diffeomorphic to the disk $\DD^{2}$. Setting $\Omega_{\pm} := \Omega \cap \BZ_{\pm}$, the set $\Omega$ is split in the disjoint union $\Omega = \Omega_{+} \cup \Omega_{-} \cup (\bigcup_{j=1}^{J} \cD_{j})$, and it holds that $\Omega_{\pm} = S(\Omega_{\mp})$.

\begin{figure}[H]
	\centering
	\begin{subfigure}[b]{0.4\textwidth}
	\begin{tikzpicture}
	\newcommand\Trim{+(-0.07,-0.07) rectangle +(0.07,0.07)}
	\foreach \point in {(0,0), (-2.5, 2.5), (-2.5, 0.5), (-1,1.8)}  {
		\draw[red!40, line width=5pt, line cap=round] (-1.5,1) -- \point;}
	\foreach \point in {(0,0), (2.5, -2.5), (2.5, -0.5), (1,-1.8)}  {
		\draw[red!40, line width=5pt, line cap=round] (1.5,-1) -- \point;}
	
	\draw [red!40, line width=5pt] (0,0) ellipse (30pt and 10pt);
	\draw [line width=2pt] (0,0) ellipse (30pt and 10pt);
	\draw[domain=-2.5:2.5,smooth,variable=\x,black!70, line width=1.5pt] plot ({\x},{0.5*sin(deg(\x*pi/2.5))});
	\node[black!70] at (1.4, 0.7) {$\Sigma_1$};
	\draw[domain=-2.5:2.5,smooth,variable=\x,black!70, line width=1.5pt] plot ({\x},{2.5-0.2*sin(deg(\x*pi/2.5))});
	\draw[domain=-2.5:2.5,smooth,variable=\x,black!70, line width=1.5pt] plot ({\x},{-2.5-0.2*sin(deg(\x*pi/2.5))});
	\node[black!70] at (1.4, 2) {$\Sigma_2$};
	\draw[white, line width=11pt] (-2.7,-2.7) -- (2.7,-2.7) -- (2.7,2.7) -- (-2.7, 2.7) -- (-2.7, -2.7);
	\foreach \point in {(-2.5, 0.5), (-1,1.8), (2.5, -0.5), (1,-1.8)}  {
		\fill[black] \point circle (2pt);}
	\foreach \point in {(0,0), (2.5, -2.5),  (-2.5, 2.5)}  {
		\fill[blue] \point \Trim;}
	\fill[black] (3, -1.5) circle (2pt); \node at (3.1, -1.5) [anchor=west] {$K_{N+1}$};
	\fill[blue] (3, -2) \Trim; \node at (3.1, -2) [anchor=west] {$K_{\TRIM}$};
	\node[red] at (-0.7, 0.8) {$\Omega$};
	\draw (-2.5,-2.5) -- (2.5,-2.5) -- (2.5,2.5) -- (-2.5, 2.5) -- (-2.5, -2.5); \node at (0, 2.7) {$\BZ$};
    \draw [red, thick] (.54,.29) circle (4pt)
                       (-.54,-.29) circle (4pt)
                       (3, -1) circle (4pt);
    \node at (3.1, -1) [anchor=west] {$\{d_j\} \setminus K_{\TRIM}$};
	\end{tikzpicture}
		\caption{2D sketch of the Brillouin zone.}
	\label{fig:pbSym}
	\end{subfigure}
	\hfill
	\begin{subfigure}[b]{0.4\textwidth}
	\includegraphics[scale = 1]{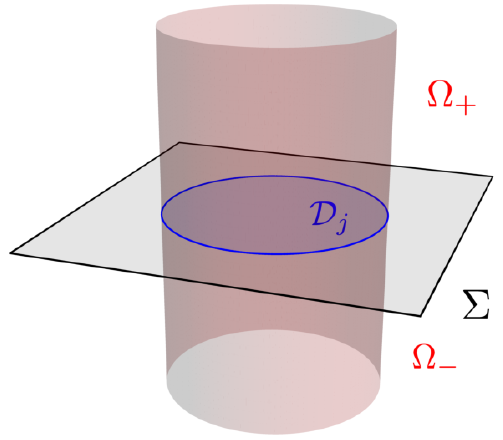}
	\vspace{1cm}
	\caption{The cut of $\Omega$ by the surfaces $\Sigma_{1,2}$.}
	\label{fig:cutOmega}
	\end{subfigure}
\caption{\eqref{fig:pbSym} The symmetric set $\Omega \subset \BZ$ enclosing $K_{N+1}$. Here, $K_{N+1}$ contains the two TRIM $(0,0)$ and $(-1/2, 1/2)$. The blue curves are the surfaces $\Sigma_{1,2}$ cutting the Brillouin zone $\BZ$ in two symmetric pieces. \eqref{fig:cutOmega} 3D visualisation of one of the cuts $\cD_j$.}
\end{figure}

We now extend $P(\bk)$, which is smooth and well-defined on $\partial \Omega$, on each of the cuts $\cD_j$, while preserving TRS. For all $1 \leq j \leq J$, the family of projectors $P_N$ is well-defined and smooth on $\cD_{j}$, and the family of rank-$1$ projectors $p_{j} := P_{N+1} - P_N$ is well-defined and smooth on $\partial \cD_j$. By construction $S(\Sigma_{i}) = \Sigma_{i}$, and for each $1 \leq j \leq J$, there are two cases: either $d_{j}$ is a TRIM, in which case $S(\cD_{j}) = \cD_{j}$, or it is not, in which case $S(\cD_{j}) = \cD_{j'}$ with $j' \neq j$. In the first case, we use Lemma~\ref{lem:TRS}  to extend smoothly $p_j$ on $\cD_j$ such that its range remains inside $\Ran (P_{N+2} - P_{N})$, while preserving TRS. In the second case, we can perform the extension by following the lines of Lemma~\ref{lem:ChernOnS2} (this is a two-dimensional extension problem, and so there is no topological obstruction in this case). The extension on $\cD_j$ induces the one on $\cD_j'$ by $P(\bk) = \theta P(S(\bk)) \theta^{-1}$ for $\bk \in \cD_j'$.

Doing so for each cut $\cD_j$, we end up with a continuous and piecewise smooth family of rank-$(N+1)$ orthogonal projectors $P$ defined on $(\BZ \setminus \Omega) \cup \cD$ by  
\begin{equation} \label{eq:P_TRS}
P_{\rm cut} := \begin{cases}
   P_{N+1} & \quad \text{on} \quad \BZ \setminus \Omega, \\
   P_N + p_j & \quad \text{on} \quad  \cD_{j} \quad \text{for} \quad 1 \le j \le J.
\end{cases}
\end{equation}

It remains to extend this construction to the whole set $\Omega$. Recall that the cuts $\left( \cD_{j} \right)$ separate $\Omega$ in two different connected pieces $\Omega_{\pm}$ (see Figure~\ref{fig:omegawithcuts}).

\begin{figure}[H]
\centering
\begin{tikzpicture}[x=0.5cm,y=0.5cm]
\foreach \deltax in {-2.5,0,2.5}{
\foreach \deltay in {-2.5,0,2.5}{
\begin{scope}[xshift=\deltax cm, yshift=\deltay cm]
\foreach \point in {(0,0), (-2.5, 2.5), (-2.5, 0.5), (-1,1.8)}  {
\draw[red!40, line width=4pt, line cap=round] (-1.5,1) -- \point;
\fill[black] \point circle (1pt);
}
\foreach \point in {(0,0), (2.5, -2.5), (2.5, -0.5), (1,-1.8)}  {
\draw[red!40, line width=4pt, line cap=round] (1.5,-1) -- \point;
\fill[black] \point circle (1pt);
}
\draw [red!40, line width=3pt] (0,0) ellipse (20pt and 10pt);
\draw [line width=1pt] (0,0) ellipse (20pt and 10pt);

\draw (-2.5,-2.5) -- (2.5,-2.5) -- (2.5,2.5) -- (-2.5, 2.5) -- (-2.5, -2.5);
\draw[blue, line width=1.5] (-0.2, -0.2) -- (0.2, 0.2);
\draw[blue, line width=1.5] (-2.7, 2.3) -- (-2.3, 2.7);
\draw[blue, line width=1.5] (2.7, -2.3) -- (2.3, -2.7);
\draw[blue, line width=1.5] (-1, -0.7) -- (-0.6, -0.3);
\draw[blue, line width=1.5] (1, 0.7) -- (0.6, 0.3);
\end{scope}
}}
\node[red] at (1,1.3) {$\Omega$};
\end{tikzpicture}
\caption{The Brillouin zone $\BZ$ and the set $\Omega$ seen as periodic sets. The blue segments mark the cuts at the intersections $\cD_j$.}
\label{fig:omegawithcuts}
\end{figure}
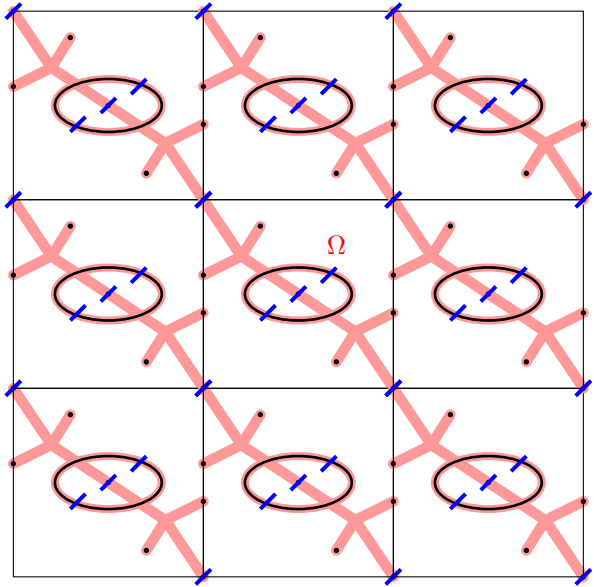

The map $P_{\rm cut}$ in~\eqref{eq:P_TRS} is well-defined, continuous and piece-wise smooth on the boundary $\partial \Omega_+$. Let us prove that we can extend $P_{\rm cut}$ continuously on $\Omega_+$. We set $p := P_{\rm cut} - P_N$ on $\partial \Omega_+ \cup \partial \Omega_-$, so that $\Ch(\partial \Omega_\pm, P_{\rm cut}) = \Ch(\partial \Omega_\pm, p) + \Ch(\partial \Omega_\pm, P_N)$. Since $P_N$ is well-defined on $\Omega$ and $P_{N+1}$ is well-defined on $\BZ \setminus \Omega$, we have that $\Ch(\partial \Omega_\pm, P_N) = \Ch(\partial \Omega_{\pm}, P_{N+1}) = 0$, and therefore
\[
0 = \Ch(\partial \Omega, p) = \Ch( \partial \Omega_+, p) + \Ch(\partial \Omega_-, p).
\]
On the other hand, from the TRS property, it holds that $\cF[p](S(\bk)) = - \cF[p](\bk)$ (see \eg \cite{panati2007triviality}). In particular,
\[
\Ch(\partial \Omega_+, p) = \dfrac{1}{2 \pi} \int_{\partial \Omega_+} \cF[p] = \dfrac{1}{2 \pi} \int_{S(\partial \Omega)} \left( - \cF[p]\right) = \dfrac{1}{2 \pi} \int_{\partial \Omega_-}  \cF[p] = \Ch(\partial \Omega_-, p),
\]
where we used the fact that $S (\partial \Omega_+) = -\partial \Omega_-$: the TRS reverses orientations. Altogether, this proves that $\Ch(\partial \Omega_+, p)  = \Ch(\partial \Omega_+, P_{\rm cut}) = 0$. From Lemma~\ref{lem:ChernOnS2}, we deduce that\footnote{Actually, since $P_{\rm cut}$ and $\partial \Omega_\pm$ are only continuous, the Chern number is not well-defined a priori as the integral of a differential form. Still, since everything is continuous and piece-wise smooth, we can easily adapt the proof of Lemma~\ref{lem:ChernOnS2} to handle this particular case.} we can construct a continuous extension of $P_{\rm cut}$ to $\Omega_+$, and then on $\Omega$ by setting $P_{\rm cont}(\bk) = \theta P_{\rm cut}(\bk)\theta^{-1}$ for $\bk \in \Omega_-$.

It remains to smooth this family out. To do so, let $g \in C^\infty_0(\R^3)$ be even with $g \ge 0$ and $\int_{\R^3} g(\bk) \rd \bk = 1$. We set $g_\delta(\bk) := \delta^{-3} g(\delta^{-1} \bk)$ for $\delta>0$. If $\delta$ is small enough, the family of self-adjoint operators $p_\delta(\bk) := \int_{\R^3} (P_{\rm cont}(\bk') -P_N(\bk'))g(\bk - \bk') \rd \bk'$ is well-defined and smooth near a neighbourhood of $\Omega$, and also TRS. If $\delta$ is small enough, the operator $A_{N,\delta}(\bk):=P_N(\bk)+({\rm Id}-P_N(\bk))p_\delta(\bk) ({\rm Id}-P_N(\bk))$ obeys $A_{N,\delta}P_N=P_N$, it is also smooth and uniformly close in norm to $P_{N+1}$ near $\partial \Omega$. In particular, we can smoothly interpolate between $A_{N,\delta}$ defined on $\Omega$, and $P_{N+1}$ defined outside $\Omega_\varepsilon \subset \Omega$, while keeping the TRS constraint; see~\eqref{eq:def:Omegaepsilon} and the following construction. This concludes the proof of Theorem~\ref{th:main}.

\appendix

\section{Analyticity is impossible} \label{app:analyticity}

Here we illustrate how the regularity of the family of projections $P$ claimed in our main Theorem~\ref{th:main} is the best possible, and cannot be in general pushed to analyticity due to the presence of Weyl points, as discussed in Section~\ref{sec:discuss}.
\begin{lemma} \label{lem:analyticity} 
If $K_N$ contains at least one Weyl point, then the only analytic projector of rank $N+1$ which spans $P_N$ in a neighbourhood of $K_N$ is $P_{N+1}$.
\end{lemma}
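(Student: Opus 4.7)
The plan is to localize near a single Weyl point $\bk_0 \in K_N$ and to show that any analytic rank-$(N+1)$ projector $P$ with $\Ran P_N \subset \Ran P$ coincides with $P_{N+1}$ on a sufficiently small neighbourhood $U$ of $\bk_0$. Shrink $U$ so that $P_{N-1}$, the Kato projector $\Pi := P_{N+1} - P_{N-1}$ onto bands $N,N+1$, and hence $P_{N+1} = P_{N-1} + \Pi$ are all real-analytic on $U$; by Assumption 1 and the isolation of Weyl points, this is possible. Set $Q := P - P_{N+1}$, a real-analytic self-adjoint family on $U$. The hypothesis on $P$ translates into three constraints: (a) $QP_{N-1} = 0$ on $U$ (which holds on $U\setminus\{\bk_0\}$ since $\Ran P_{N-1}\subset \Ran P\cap \Ran P_{N+1}$ and extends by analyticity); (b) $Q v_N = 0$ on $U\setminus\{\bk_0\}$, where $v_N$ is a unit band-$N$ eigenvector; and (c) the idempotency relation $P_{N+1}Q + QP_{N+1} + Q^2 = Q$.

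The first step is to show $Q(\bk_0) = 0$. The Weyl hypothesis means that inside $\Ran\Pi$ the Bloch Hamiltonian has the local form $B(\bk)\cdot\sigma$ modulo a scalar, with $B(\bk_0) = 0$ and $B'(\bk_0)$ invertible. As $\bk \to \bk_0$ along direction $\hat n \in \bbS^2$, $v_N(\bk)$ converges to the negative-energy eigenvector of $(B'(\bk_0)\hat n)\cdot\sigma$, and invertibility of $B'(\bk_0)$ forces this limit to traverse every complex line of $\Ran\Pi(\bk_0)$ as $\hat n$ ranges over $\bbS^2$. Continuity of $Q$ and (b) then force $Q(\bk_0)$ to annihilate $\Ran\Pi(\bk_0)$, while (a) gives $Q(\bk_0)|_{\Ran P_{N-1}(\bk_0)} = 0$; together, $\Ran P_{N+1}(\bk_0)\subset\Ran P(\bk_0)$, and a rank count forces $P(\bk_0) = P_{N+1}(\bk_0)$.

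The conclusion then follows by induction on the Taylor order of $Q$ at $\bk_0$. Write $Q(\bk_0 + t) = \sum_{j\geq 1} Q^{[j]}(t)$ in homogeneous polynomials of degree $j$ in $t\in\R^3$, assume $Q^{[j]} = 0$ for $1\le j < k$, and aim to show $Q^{[k]} = 0$. Expanding (b) at order $|t|^k$ and substituting $y = B'(\bk_0)t$, one obtains
\[
(-y_1 + \ri y_2)\,D_1(y) + (|y| + y_3)\,D_2(y) = 0, \qquad y\in\R^3\setminus\{0\},
\]
where $D_j(y) := Q^{[k]}(B'(\bk_0)^{-1}y)\,e_j$ are $\C^M$-valued homogeneous polynomials of degree $k$ in $y$ and $\{e_1, e_2\}$ is the Kato frame of $\Ran\Pi(\bk_0)$. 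The key step --- and, I expect, the main technical obstacle --- is purely algebraic: since $y_1^2 + y_2^2 + y_3^2$ is irreducible in $\C[y_1, y_2, y_3]$ and thus not a perfect square, $|y|$ is not a rational function, so the componentwise polynomial identity $|y|\,(D_2)_\ell = (y_1 - \ri y_2)(D_1)_\ell - y_3(D_2)_\ell$ is impossible unless $(D_2)_\ell\equiv 0$, which then also forces $(D_1)_\ell\equiv 0$. Thus $Q^{[k]}|_{\Ran\Pi(\bk_0)} = 0$; combined with $Q^{[k]}|_{\Ran P_{N-1}(\bk_0)} = 0$ from (a) at order $k$, this yields $Q^{[k]}|_{\Ran P_{N+1}(\bk_0)} = 0$. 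Self-adjointness of $Q^{[k]}$ then gives $\Ran Q^{[k]}\subset \Ran P_{N+1}(\bk_0)^\perp$, and the off-diagonal relations $P_{N+1}(\bk_0)Q^{[k]}P_{N+1}(\bk_0) = 0 = (I - P_{N+1}(\bk_0))Q^{[k]}(I - P_{N+1}(\bk_0))$ extracted from (c) at order $k$ force $Q^{[k]}\equiv 0$. Real-analyticity finishes the proof: $Q\equiv 0$ on $U$, hence $P = P_{N+1}$.
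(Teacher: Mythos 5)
Your argument is correct, and it takes a genuinely different route from the paper's. The paper reduces to a one--dimensional problem: it fixes a direction $\mathbf{d}$, invokes Rellich's theorem on analytic families in one real variable to obtain an analytic unitary $U(k)$ that block-diagonalises $H(\bk_0+k\mathbf{d})$ near the crossing, and exploits the fact that the band-$N$ eigenvector swaps between the two Kato basis vectors $e_1,e_2$ as $k$ changes sign; the constraints $\widetilde P(k)e_1=e_1$ for $k<0$ and $\widetilde P(k)e_2=e_2$ for $k>0$ immediately kill the lowest nonvanishing Taylor coefficient $\widetilde P_n$ via the idempotency relation, with no need to discuss the angular structure of the Weyl eigenvector. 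You instead stay in three dimensions: you set $Q := P - P_{N+1}$, expand in homogeneous Taylor order, extract the identity $(y_1-\ri y_2)D_1(y)-(|y|+y_3)D_2(y)=0$ from $Q v_N=0$, and kill $D_1,D_2$ by the algebraic observation that $\sqrt{y_1^2+y_2^2+y_3^2}$ is not a rational function (equivalently, $y_1^2+y_2^2+y_3^2$ is irreducible in $\C[y_1,y_2,y_3]$). The paper's route is shorter and avoids the irrationality argument, at the cost of relying on analytic perturbation theory for one-parameter families and a patching step to upgrade equality on each radial line to equality on a full neighbourhood; your route is more self-contained and makes the Weyl-point geometry (the surjectivity of $\hat n\mapsto$ lowest eigenline of $(B'(\bk_0)\hat n)\cdot\sigma$, the ``hedgehog'') explicit, but it is somewhat more intricate to set up because one must keep careful track of Taylor orders in the presence of the non-analytic factor $|y|$. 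Two small remarks: the relations $P_{N+1}(\bk_0)Q^{[k]}P_{N+1}(\bk_0)=0=(\bbI-P_{N+1}(\bk_0))Q^{[k]}(\bbI-P_{N+1}(\bk_0))$ are better described as the \emph{diagonal} blocks of $Q^{[k]}$, and in fact, once $P_{N+1}(\bk_0)Q^{[k]}=Q^{[k]}P_{N+1}(\bk_0)=0$ is known, the order-$k$ term of $P_{N+1}Q+QP_{N+1}+Q^2=Q$ directly gives $Q^{[k]}=0$; also, the separate base step ``$Q(\bk_0)=0$'' could be absorbed into the induction by running the same algebraic argument at $k=0$ and replacing the idempotency step by the rank count you already give.
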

In particular, if $K_{N+1}$ is non empty, then there is no analytic projector of rank $N+1$ containing $P_{N}$ on $\BZ \setminus K_{N}$. This means that Wannier functions for metals can not be exponentially localised in general. We do not know whether we can always find an analytic rank-$(N+2)$ projector that contains $P_N$ in the case where $K_{N+2} \neq \emptyset$.

\begin{proof}[Proof of Lemma~\ref{lem:analyticity}.]
Assume that $H(\bk)$ {is real analytic and} has a Weyl point (a conical crossing of two eigenvalues) at a point $\bk_{0}$. We assume without loss of generality that the crossing happens between the first two eigenvalues of $H(\bk_{0})$ at level $0$. Then, in any direction ${\bf d} \in \R^{3}$, the function $k \mapsto H(\bk_{0} + k{\bf{d}})$ can be partially diagonalised as
\begin{align*}
  H(\bk_{0} + k{\bf{d}})=
  U(k)
\left( 
  \begin{array}{cc|c}
{c} \,k {+\mathcal{O}(k^2)}&0&0\\
0&-{c} \, k {+\mathcal{O}(k^2)}&0\\\hline
0&0&M(k)
\end{array} \right)
U(k)^{-1}
\end{align*}
where $c>0$ is constant, $M$ is an {operator whose spectrum is bounded from below by a positive constant, and $U(k)$ is an analytic family of unitary operators} {(see \eg \cite[Theorem XII.13]{reed1978analysis})}. We would like to construct a projector $P(k)$ of rank 2 that is analytic with respect to $k$ and {whose range includes} the eigenspace corresponding to the first eigenvalue of $H$ in a neighbourhood of $k=0$. Assume this is possible. Then $\widetilde{P}(k) = U(k)^{-1}P(k) U(k)$ is analytic and $\widetilde{P}(k) = \widetilde{P}_{0} + \sum_{j=1}^{\infty} \widetilde{P}_{j} k^{j}$ in a neighbourhood of $0$. The projector $\widetilde{P}$ must satisfy $\widetilde P(k) e_{1} = e_{1}$ for $k < 0$, and $\widetilde P(k) e_{2} = e_{2}$ for $k > 0$. By continuity,
\begin{align*}
  \widetilde{P}_{0} =\left( 
  \begin{array}{c|c}
    \bbI_{2}&0\\\hline
    0&0
  \end{array} \right).
\end{align*}

Let $n$ be the smallest index $j$ such that $\widetilde{P}_{j} \neq 0$. Then for all $k<0$ we must have that 
\[
\widetilde{P}(k)e_1=e_1= \widetilde{P}(0)e_1+k^n \widetilde{P}_n e_1 +\mathcal{O}(k^{n+1}),
\]
hence $\widetilde{P}_{n} e_{1} =0$. Similarly, $ \widetilde{P}_{n} e_{2} = 0$, and therefore $\widetilde{P}_{n} \widetilde{P}(0)= \widetilde{P}(0) \widetilde{P}_{n} = 0$, showing that the off-diagonal terms in $\widetilde{P}_{n}$ are zero. Also, from {$\widetilde{P}(k)^2=\widetilde{P}(k)$} we get $\widetilde{P}_{n} = \widetilde{P}(0)\widetilde{P}_{n} + \widetilde{P}_{n}\widetilde{P}(0)$ hence $\widetilde{P}_{n}=0$, contradicting our assumption on $n$. This leads to $\widetilde{P}(k) =\widetilde{P}(0)$ and therefore $P(k) = U(k)\widetilde{P}(0)U(k)^{-1}$ for $k$ small enough. Since ${\bf d}$ was arbitrary and by analyticity, it follows that $P(\bk)$ is identically equal to the spectral projection of $H(\bk)$ corresponding to its two lowest eigenvalues. The statement follows.
\end{proof}

\section{Extension of projections on general surfaces} \label{app:genus}

In this Appendix, we indicate how to modify the proof of Lemma~\ref{lem:ChernOnS2} to the case of a general bounded open set $\Omega \subset \R^3$ with smooth compact boundary $\partial \Omega$. Once again, \eqref{item:extension}$\Longrightarrow$\eqref{item:Ch=0} follows from the Bianchi's identity together with Stokes' theorem. Hence we just need to show that \eqref{item:Ch=0}$\Longrightarrow$\eqref{item:frame}$\Longrightarrow$\eqref{item:extension}.

According to the classification of closed orientable surfaces, $\partial \Omega$ is diffeomorphic to a surface of genus $g$. For instance, the case $\partial\Omega= \bbS^2$ treated in Section~\ref{sec:chern} corresponds to $g=0$, the case of a torus corresponds to $g=1$, and so on. Let us prove \eqref{item:Ch=0}$\Longrightarrow$\eqref{item:frame}. Although this can can be argued by means of standard techniques from algebraic topology (see \eg~\cite[Prop.~4]{panati2007triviality} and references therein), we provide here an explicit construction of a smooth frame (see also~\cite{monaco2017chern}). One consider the fundamental polygon of $\partial \Omega$, obtained by cutting the surface $\partial \Omega$ along the $2g$ loops $\{a_i, b_i\}_{1\le i \le g}$, generating its fundamental group as cycles~\cite[\S2.4]{jost2013compact}. One obtains a $4g$-gon $F$ whose sides are labelled by $a_1$, $b_1$, $a_1^{-1}$, $b_1^{-1}$,\ldots, $a_g$, $b_g$, $a_g^{-1}$, $b_g^{-1}$, in this order (see Figure~\ref{fig:genus}). By means of parallel transport and holonomy cancellation \cite{cornean2016construction}, one can construct a frame on the sides labelled by $a_i$ and $b_i$, which are then extended to the sides labelled by $a_i^{-1}$ and $b_i^{-1}$ by the obvious identifications. We obtain a frame $\Phi_{\partial F}$ on $\partial F$ which satisfies the periodicity conditions. On the other hand, we can start from a frame at the center of $F$, and parallel transport it to the whole cell $F$. We obtain a frame $\Phi_F$ defined on $F$, which may not satisfy the periodicity conditions at the boundary. The ``mismatch'' between $\Phi_F$ and $\Phi_{\partial F}$ on $\partial F$ is encoded in an obstruction matrix $U_{\rm obs}$ defined on $\partial F \simeq \bbS^1$, and whose winding number coincides with the Chern number of the family of projections along $\partial \Omega$ (compare \eqref{hc101}). The vanishing of the latter allows to extend $U_{\rm obs}$ on $F$, and therefore to cure $\Phi_F$. Notice that this construction was already indicated in \cite{monaco2015symmetry} (see also \cite{fiorenza2016construction}) for the case of the torus, that is, $g=1$.

\begin{figure}[ht]
\centering
\begin{tikzpicture}[scale=.5,%
->-/.style={decoration={
  markings,
  mark=at position .5 with {\arrow{>}}},postaction={decorate}},%
-<-/.style={decoration={
  markings,
  mark=at position .5 with {\arrow{<}}},postaction={decorate}},%
  ]
\draw (-15,0) node [anchor=east] {$\partial \Omega$};
\draw (2,0) node {$F$};
\draw [thick, brown] (-11.1,0) arc (180:0:.55 and .3);
\draw [thick, brown] (-10,0) arc (0:-90:.5 and 2.45);
\draw [thick, brown, dashed] (-11,-.25) arc (180:270:.5 and 2.4);
\draw [thick, green] (-8.9,0) arc (0:180:.55 and .3);
\draw [thick, green] (-10,0) arc (180:270:.5 and 2.45);
\draw [thick, green, dashed] (-9,-.25) arc (0:-90:.5 and 2.4);
\draw [thick, red] (-12,0) circle (2 and 1);
\draw [thick, blue] (-8,0) circle (2 and 1);
\draw [thick] (-10,0) circle (5 and 2.5)
              (-10.5,0.25) arc (0:-180:1.5 and .75)
              (-9.5,0.25) arc (-180:0:1.5 and .75)
              (-11,-0.25) arc (0:180:1 and .5)
              (-9,-0.25) arc (180:0:1 and .5);
\draw [red] (-3,1.5) node [anchor=east] {$a_1$} -- (-2,1.5);
\draw [blue] (-3,.5) node [anchor=east] {$a_2$} -- (-2,.5);
\draw [brown] (-3,-.5) node [anchor=east] {$b_1$} -- (-2,-.5);
\draw [green] (-3,-1.5) node [anchor=east] {$b_2$} -- (-2,-1.5);
\draw [thick,red,->-] ({2+2.5*cos(22.5)},{2.5*sin(22.5)}) -- ({2+2.5*cos(22.5+45)},{2.5*sin(22.5+45)}); 
\draw [thick,red,-<-] ({2+2.5*cos(22.5+90)},{2.5*sin(22.5+90)}) -- ({2+2.5*cos(22.5+135)},{2.5*sin(22.5+135)});
\draw [thick,brown,->-] ({2+2.5*cos(22.5+45)},{2.5*sin(22.5+45)}) -- ({2+2.5*cos(22.5+90)},{2.5*sin(22.5+90)}) ;
\draw [thick,brown,-<-] ({2+2.5*cos(22.5+135)},{2.5*sin(22.5+135)}) -- ({2+2.5*cos(22.5+180)},{2.5*sin(22.5+180)});
\draw [thick,blue,->-] ({2+2.5*cos(22.5+180)},{2.5*sin(22.5+180)}) -- ({2+2.5*cos(22.5+225)},{2.5*sin(22.5+225)});
\draw [thick,blue,-<-] ({2+2.5*cos(22.5+270)},{2.5*sin(22.5+270)}) -- ({2+2.5*cos(22.5+315)},{2.5*sin(22.5+315)});
\draw [thick,green,->-] ({2+2.5*cos(22.5+225)},{2.5*sin(22.5+225)}) -- ({2+2.5*cos(22.5+270)},{2.5*sin(22.5+270)});
\draw [thick,green,-<-]
({2+2.5*cos(22.5+315)},{2.5*sin(22.5+315)}) -- ({2+2.5*cos(22.5)},{2.5*sin(22.5)});
\end{tikzpicture}
\caption{The surface $\partial \Omega$ of genus $g=2$, its fundamental loops, and its fundamental polygon $F$.}
\label{fig:genus}
\end{figure}
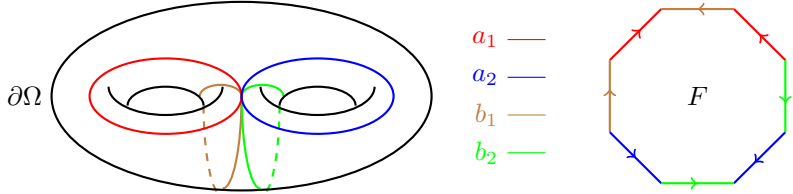

It remains to prove \eqref{item:frame}$\implies$\eqref{item:extension}. Without loss of generality, we may assume that $\Omega$ is the set obtained by thickening a bouquet $\cC$ of $g$ circles $(c_j)_{1 \le j \le g}$ meeting at a common point $\bk_0$. In this case, there exists a smooth function $R(\bK, t)$ defined for $\bK \in \partial \Omega$ and $t \in [0, 1]$, such that $R(\bK, t = 0) \in \cC$ while $R(\bK, t = 1) = \bK$, and such that $R$ is bijective from $ \partial \Omega \times (0, 1)$ to $\Omega \setminus \cC$ (think of smoothly shrinking the thickness of $\Omega$ to zero).

We first reduce to the case $Q(\bk) \equiv \mathrm{Id}$. As in \textbf{Step 2a} of the proof provided in Section~\ref{sec:chern}, we let $\Psi(\bk_0)$ be a frame of $Q(\bk_0)$. We first perform parallel transport along each of the circles $c_j$, and cure the holonomy acquired along the loop, \eg as in \cite{cornean2016construction}. This gives a smooth frame on $\cC = R(\partial \Omega, 0)$. We then parallel transport this frame along $R(\bK, t)$ for $t \in [0, 1]$ to get a smooth extension of the frame $\Psi$ to the whole $\Omega$. We conclude as before that we can work in the $\Psi$ basis, and assume that $Q \equiv \mathrm{Id}$.
  
We now extend the frame $\Phi$ for $P$ from $\partial \Omega$ to $\Omega$. By induction, and following the proof in Section~\ref{sec:chern}, it suffices to consider the case in which the rank of $P$ is $1$ and the dimension of the ambient Hilbert space is $M \ge 2$. In this case, the frame 
$\Phi$ consists of a smooth map $\phi \colon \partial \Omega \to \bbS^{2M-1}$, and Sard's lemma guarantees the existence of a point $-\phi^* \in \bbS^{2M-1}$ not in its image. Arguing as in~\eqref{eq:contractionTrick}, one can interpolate between $\phi(\bK)$ and $\phi^*$ following the rays of $R(\bK, t)$ from $t = 1$ to $t = 1/2$. We then further extend with the constant map $\phi^*$ in the rest of $\Omega$. The projection on the extended frame provides the desired extension of $P$.

\bigskip\bigskip

{\footnotesize

\begin{tabular}{rl}
(H.D. Cornean) & \textsc{Department of Mathematical Sciences, Aalborg University} \\
 &  Skjernvej 4A, 9220 Aalborg, Denmark \\
 &  \textsl{E-mail address}: \href{mailto:cornean@math.aau.dk}{\texttt{cornean@math.aau.dk}} \\
 \\
(A. Levitt) & \textsc{Inria Paris and Universit\'e Paris-Est, CERMICS (ENPC)} \\
 &  F-75589 Paris Cedex 12, France\\
 &  \textsl{E-mail address}: \href{mailto:antoine.levitt@inria.fr}{\texttt{antoine.levitt@inria.fr}} \\
  \\
(D. Monaco) & \textsc{Dipartimento di Matematica e Fisica, Universit\`{a} degli Studi di Roma Tre} \\
 &  Largo San Leonardo Murialdo 1, 00146 Rome, Italy \\
 &  \textsl{E-mail address}: \href{mailto:dmonaco@mat.uniroma3.it}{\texttt{dmonaco@mat.uniroma3.it}} \\
 \\
(D. Gontier) & \textsc{Universit\'e Paris-Dauphine, PSL Research University, CEREMADE} \\
& 75775 Paris, France \\
& \textsl{E-mail address}: \href{mailto:gontier@ceremade.dauphine.fr}{\texttt{gontier@ceremade.dauphine.fr}} \\
\end{tabular}

}
\end{document}